%-----------------------------------------------------------------------
% On the Combinatorial Complexity of Approximating Polytopes
% S. Arya, G. D. da Fonseca, and D. M. Mount
%-----------------------------------------------------------------------
\documentclass[11pt]{article} 
%-----------------------------------------------------------------------
%  Select from the following packages
%-----------------------------------------------------------------------
\usepackage[letterpaper,hmargin=1in,vmargin=1.25in]{geometry}
\usepackage{amsmath}                    % AMS math
\usepackage{amssymb}                    % special AMS math symbols
\usepackage{amsthm}                     % AMS theorem/proof
\usepackage{amsfonts}                   % more special AMS math symbols
\usepackage{cite}                       % sort bibliographical entries in citations
\usepackage{hyperref}                   % add hyper-references
\hypersetup{colorlinks=true,linkcolor=[rgb]{0.75,0,0},citecolor=[rgb]{0,0,0.75}}
\usepackage{enumitem}                   % added controls for enumerations
\usepackage{graphicx}                   % graphics (includegraphics)
%\usepackage[mathlines]{lineno}\linenumbers % line numbers for reviewers

%------------Title/Author information-----------------------------------
\title{On the Combinatorial Complexity of Approximating Polytopes}

\author{%
	Sunil Arya\\
		Department of Computer Science and Engineering \\
		Hong Kong University of Science and Technology \\
		Clear Water Bay, Kowloon, Hong Kong\\
		arya@cse.ust.hk \\
		\and
	Guilherme D. da Fonseca\\
		Universit\'{e} d'Auvergne and LIMOS\\
		Clermont-Ferrand, France\\
		guilherme.dias\_da\_fonseca@udamail.fr \\
		\and
	David M. Mount\\
		Department of Computer Science and \\
		Institute for Advanced Computer Studies \\
		University of Maryland \\
		College Park, Maryland 20742 \\
		mount@cs.umd.edu \\
}

%-----------------------------------------------------------------------
% Miscellaneous commands and shortcuts
%-----------------------------------------------------------------------
\newcommand{\floor}[1]{\left\lfloor #1\right\rfloor}

\newcommand{\ang}[1]{\left\langle #1 \right\rangle}
\newcommand{\RE}{\mathbb{R}}
\newcommand{\eps}{\varepsilon}

\newcommand{\RR}{\mathcal{R}}
\newcommand{\CC}{\mathcal{C}}
\newcommand{\HH}{\mathcal{H}}
\newcommand{\WW}{\mathcal{W}}

\newcommand{\diam}{\mathrm{diam}}
\newcommand{\radius}{\mathrm{radius}}
\newcommand{\conv}{\mathrm{conv}}

\newcommand{\vol}{\mathrm{vol}}
\newcommand{\width}{\mathrm{width}}

\newcommand{\ray}{\mathrm{ray}}
\newcommand{\etal}{\textit{et al.}}
\renewcommand{\tilde}[1]{\widetilde{#1}}
\newtheorem{theorem}{Theorem}[section]
\newtheorem{lemma}[theorem]{Lemma}

\begin{document}

\maketitle

%-----------------------------------------------------------------------
\begin{abstract}
Approximating convex bodies succinctly by convex polytopes is a fundamental problem in discrete geometry. A convex body $K$ of diameter $\diam(K)$ is given in Euclidean $d$-dimensional space, where $d$ is a constant. Given an error parameter $\eps > 0$, the objective is to determine a polytope of minimum combinatorial complexity whose Hausdorff distance from $K$ is at most $\eps \cdot \diam(K)$. By combinatorial complexity we mean the total number of faces of all dimensions of the polytope. A well-known result by Dudley implies that $O(1/\eps^{(d-1)/2})$ facets suffice, and a dual result by Bronshteyn and Ivanov similarly bounds the number of vertices, but neither result bounds the total combinatorial complexity. We show that there exists an approximating polytope whose total combinatorial complexity is $\tilde{O}(1/\eps^{(d-1)/2})$, where $\tilde{O}$ conceals a polylogarithmic factor in $1/\eps$. This is a significant improvement upon the best known bound, which is roughly $O(1/\eps^{d-2})$.

Our result is based on a novel combination of both old and new ideas. First, we employ Macbeath regions, a classical structure from the theory of convexity. The construction of our approximating polytope employs a new stratified placement of these regions. Second, in order to analyze the combinatorial complexity of the approximating polytope, we present a tight analysis of a width-based variant of B\'{a}r\'{a}ny and Larman's \emph{economical cap covering}. Finally, we use a deterministic adaptation of the witness-collector technique (developed recently by Devillers {\etal}) in the context of our stratified construction.
\end{abstract}
%-----------------------------------------------------------------------

\bigskip
\noindent\textbf{Keywords:} Convex polytopes, polytope approximation, combinatorial complexity, Macbeath regions
\bigskip

%=======================================================================
\section{Introduction} \label{s:intro}
%=======================================================================

Approximating general convex bodies by convex polytopes is a fundamental geometric problem. It has been extensively studied in the literature under various formulations. (See Bronstein~\cite{Bro08} for a survey.) Consider a convex body $K$, that is, a closed, convex set of bounded diameter, in Euclidean $d$-dimensional space. At issue is the structure of the simplest polytope $P$ that approximates $K$.

There are various ways to define the notions of ``simplest'' and ``approximates.'' Our notion of approximation will be based on the \emph{Hausdorff metric}, that is, the maximum distance between a point in the boundary of $P$ or $K$ and the boundary of the other body. Normally, approximation error is defined relative to $K$'s diameter. It will simplify matters to assume that $K$ has been uniformly scaled to unit diameter. For a given error $\eps > 0$, we say that a polytope $P$ is an \emph{$\eps$-approximating polytope} to $K$ if the Hausdorff distance between $K$ and $P$ is at most $\eps$. The simplicity of an approximating polytope $P$ will be measured in terms of its \emph{combinatorial complexity}, that is, the total number of $k$-faces, for $0 \le k \le d-1$. For the purposes of stating asymptotic bounds, we assume that the dimension $d$ is a constant.

The bounds given in the literature for convex approximation are of two common types~\cite{Bro08}. In both cases, the bounds hold for all $\eps \leq \eps_0$, for some $\eps_0 > 0$. In \emph{nonuniform bounds}, the value of $\eps_0$ depends on $K$ (for example, on $K$'s maximum curvature). Such bounds are often stated as holding ``in the limit'' as $\eps$ approaches zero, or equivalently as the combinatorial complexity of the approximating polytope approaches infinity. Examples include bounds by Gruber~\cite{Gru93}, Clarkson~\cite{Cla06}, and others 
\cite{Bor00,Sch87,Tot48}. Our interest is in \emph{uniform bounds}, where the value of $\eps_0$ is independent of $K$. Examples include the results of Dudley~\cite{Dud74} and Bronshteyn and Ivanov~\cite{BrI76}. Such bounds hold without any assumptions on $K$.

Dudley showed that, for $\eps \leq 1$, any convex body $K$ of unit diameter can be $\eps$-approximated by a convex polytope $P$ with $O(1/\eps^{(d-1)/2})$ facets. This bound is known to be optimal in the worst case and is achieved when $K$ is a Euclidean ball (see, e.g.,~\cite{Bro08}). Alternatively, Bronshteyn and Ivanov showed the same bound holds for the number of vertices, which is also the best possible. No convex polytope approximation is known that attains both bounds simultaneously.%
\footnote{Jeff Erickson noted that both bounds can be attained simultaneously but at the cost of sacrificing convexity~\cite{Cla06}.}

Establishing good uniform bounds on the combinatorial complexity of convex polytope approximations is a major open problem. The Upper-Bound Theorem~\cite{MCM70} implies that a polytope with $n$ vertices (resp., facets) has total combinatorial complexity $O(n^{\floor{d/2}})$. Applying this to the results of either Dudley or Bronshteyn and Ivanov directly yields a bound of $O(1/\eps^{(d^2-d)/4})$ on the combinatorial complexity of an $\eps$-approximating polytope. Better uniform bounds without $d^2$ in the exponent are known, however. Consider a uniform grid $\Psi$ of points with spacing $\Theta(\eps)$, and let $P$ denote the convex hull of $\Psi \cap K$. It is easy to see that $P$ is an $\eps$-approximating polytope for $K$. The combinatorial complexity of any lattice polytope%
\footnote{A \emph{lattice polytope} is the convex hull of any set of points with integer coordinates.}
is known to be $O(V^{(d-1)/(d+1)})$, where $V$ is the volume of the polytope~\cite{And63,Bar08}. This implies that $P$ has combinatorial complexity $O(1/\eps^{d(d-1)/(d+1)}) \approx O(1/\eps^{d - 2})$. While this is significantly better than the bound provided by the Upper-Bound Theorem, it is still much larger than the lower bound of $\Omega(1/\eps^{(d-1)/2})$.

We show that this gap can be dramatically reduced. In particular, we establish an upper bound on the combinatorial complexity of convex approximation that is optimal up to a polylogarithmic factor in $1/\eps$.

%-----------------------------------------------------------------------
\begin{theorem} \label{thm:main}
Let $K \subset \RE^d$ be a convex body of unit diameter, where $d$ is a fixed constant. For all sufficiently small positive $\eps$ (independent of $K$) there exists an $\eps$-approximating convex polytope $P$ to $K$ of combinatorial complexity $O(1/{\widehat{\eps}^{\kern+1pt (d-1)/2}})$, where $\widehat{\eps} = \eps/\log (1/\eps)$.
\end{theorem}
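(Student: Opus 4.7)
The plan is to construct $P$ as the convex hull of carefully chosen points drawn from a \emph{stratified} family of Macbeath regions, and then to bound its total combinatorial complexity by a deterministic witness-collector argument rather than by appealing to the Upper-Bound Theorem (which would lose an extra $d/2$ exponent). The polylogarithmic factor $\log(1/\eps)$ will enter only once, when summing the contributions of the strata.

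First, I would introduce a geometric stratification of the $\eps$-boundary region of $K$. Let $t_k = 2^k \eps$ for $k = 0, 1, \ldots, K_0$ with $K_0 = \Theta(\log(1/\eps))$, and consider, for each $k$, the set of boundary points whose ``local width scale'' is $\Theta(t_k)$, meaning that the smallest cap of $K$ that cuts off an $\eps$-deep slab near that point has width approximately $t_k$. Inside each stratum I would place a maximal packing of Macbeath regions $M^\lambda(x)$ of parameter proportional to $t_k$, using their classical packing/covering property: a constant dilate of such a packing covers the stratum, and the regions themselves are pairwise disjoint. Choosing for each Macbeath region a canonical point of $\partial K$ and taking the convex hull of all selected points yields a polytope $P$ which, by the standard Macbeath-region approximation lemma, is at Hausdorff distance at most $\eps$ from $K$.

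Next, I would apply the width-based variant of B\'ar\'any and Larman's economical cap cover to count Macbeath regions per stratum. In stratum $k$, the number of regions of width parameter $t_k$ needed is $O((\eps/t_k)\cdot (1/t_k)^{(d-1)/2}\cdot t_k^{(d-1)/2}) = O((1/\eps)^{(d-1)/2})$ per stratum after the balancing inherent to the width-based bound. Summing over $K_0 = O(\log(1/\eps))$ strata gives a total count of
\[
  O\!\left(\frac{\log(1/\eps)^{(d+1)/2}}{\eps^{(d-1)/2}}\right) \;=\; O\!\left(1/\widehat{\eps}^{\,(d-1)/2}\right),
\]
which already bounds the number of vertices (and by a dual construction, the number of facets) of $P$.

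The main obstacle, and the key novelty, is bounding the \emph{total} combinatorial complexity rather than just vertices or facets. For this I would deploy a deterministic witness-collector argument adapted to the stratified construction. To each face $F$ of $P$ of any dimension, I would associate a witness Macbeath region in the stratification of the appropriate scale, defined so that $F$ lies in a neighborhood that can only be ``collected'' by a bounded number of other faces. The disjointness of the Macbeath regions at each level, together with the shrinking property across levels, would ensure that every witness region is charged only $O(1)$ times per stratum. Summing the charges across the $O(\log(1/\eps))$ strata then yields the same $\tilde{O}(1/\eps^{(d-1)/2})$ bound on the total number of faces of all dimensions. The hard part is defining the witness/collector relation so that intermediate-dimensional faces --- which have no natural counterpart in the Dudley or Bronshteyn-Ivanov constructions --- each find a charging target of bounded multiplicity; this is where the paper's stratified placement of Macbeath regions is essential, since a single-scale placement would force an unavoidable polynomial blowup in complexity when faces span multiple curvature scales.
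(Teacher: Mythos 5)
Your high-level outline correctly identifies the three ingredients the paper uses (Macbeath regions, a width-based economical cap cover, and a deterministic witness-collector argument), but the specific stratification you describe is not the paper's, and the version you propose does not resolve the central obstruction. You stratify by a ``local width scale'' $t_k = 2^k\eps$ and place Macbeath regions of parameter $\Theta(t_k)$ \emph{on the boundary} at each scale. The paper instead applies the economical cap cover once, at the single scale $\alpha = \Theta(\eps/\log(1/\eps))$, obtaining $O(1/\alpha^{(d-1)/2})$ Macbeath regions all of cap-width $\Theta(\alpha)$, then groups them by \emph{volume} and \emph{translates each group inward} into one of $\Theta(\log(1/\alpha))$ nested shrunken copies $K_j$ of $K$, with the smallest-volume regions pushed deepest. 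The $\log$ factor appears because the $\Theta(\log(1/\alpha))$ layers of thickness $\Theta(\alpha)$ must all fit within distance $\eps$ of $\partial K$, forcing $\alpha = \Theta(\eps/\log(1/\eps))$; it does not come from summing a per-stratum bound as you suggest (and your per-stratum count $O\bigl((\eps/t_k)(1/t_k)^{(d-1)/2}t_k^{(d-1)/2}\bigr)$ simplifies to $O(\eps/t_k)$, which is not $O(1/\eps^{(d-1)/2})$, so that step is also arithmetically broken).

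The missing idea is precisely this inward translation, and without it the witness-collector Property~3 fails. Consider $K$ a cylinder in $\RE^3$: near the circular rim of a flat face there are $\Omega(1/\sqrt\eps)$ disjoint Macbeath regions of cap-width $\Theta(\eps)$, all of small volume and all lying in $K(\eps)$. A single cap of width $\Theta(\eps)$ parallel to the flat face intersects \emph{all} of them. Placing more Macbeath regions at coarser scales $t_k$ on the boundary does nothing to fix this --- the offending small-volume regions are still all reachable by one collector cap. In the paper, these same small-volume regions are translated into the deepest layer $K_t$; a collector based on a region in layer $j$ is $\bigcup_{r\le j} C_{i,r}^\sigma \cap L_r$, and the key estimate (Lemma~\ref{lem:prop3}) is that the bodies in layer $r<j$ have volume $\Omega(2^{j-r})$ times larger than those in layer $j$ while the collector's slice in layer $r$ has volume only $O((j-r+1)^d)$ times larger, so a packing argument gives $O((j-r+1)^d/2^{j-r+1})$ bodies per layer, summing to $O(1)$. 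Your proposal has no analogue of this exponential-vs-polynomial competition because nothing is moved off the boundary, so the collector count is not bounded. Finally, note that once Property~3 holds with constant multiplicity, the witness-collector lemma bounds the \emph{total} face count directly; there is no need for a separate dual construction to bound facets as your sketch suggests.
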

%-----------------------------------------------------------------------

This is within a factor of $O(\log^{(d-1)/2} (1/\eps))$ of the aforementioned lower bound. Our approach employs a classical structure from the theory of convexity, called \emph{Macbeath regions}~\cite{Mac52}. Macbeath regions have found numerous uses in the theory of convex sets and the geometry of numbers (see B\'{a}r\'{a}ny~\cite{Bar00} for an excellent survey). They have also been applied to a small but growing number of results in the field of computational geometry (see, e.g.,~\cite{BCP93, AMX12, AMM09b, AFM12b}). Our construction of the approximating polytope uses a new stratified placement of these regions. In order to analyze the combinatorial complexity of the approximating polytope, in Section~\ref{s:ecc} we present a tight analysis of a width-based variant of B\'{a}r\'{a}ny and Larman's economical cap covering. This result plays a central role in our recent work on approximate polytope membership queries \cite{AFM17a} and may find use in other applications. Finally, we employ a deterministic version of the witness-collector technique, developed recently by Devillers {\etal}~\cite{DGG13}, in the context of our stratified construction.

The paper is organized as follows. In Section~\ref{s:prelim}, we define concepts related to Macbeath regions and present some of their key properties. In Section~\ref{s:ecc}, we prove the width-based economical cap covering lemma. The stratified placement of the Macbeath regions and the bound on the combinatorial complexity of approximating polytopes follow in Section~\ref{s:approx}. We conclude with several open problems in Section~\ref{s:conclusion}.

%=======================================================================
\section{Geometric Preliminaries} \label{s:prelim}
%=======================================================================

Recall that $K$ is a convex body of unit diameter in $\RE^d$. Let $\partial K$ denote its boundary. Let $O$ denote the origin of $\RE^d$, and for $x \in \RE^d$ and $r \ge 0$, let $B^r(x)$ denote the Euclidean ball of radius $r$ centered at $x$. It will be convenient to first map $K$ to a convenient form. We say that a convex body $K$ is in \emph{canonical form} if $B^{1/2d}(O) \subseteq K \subseteq B^{1/2}(O)$. Given a parameter $0 < \gamma \le 1$, we say that a convex body $K$ is \emph{$\gamma$-fat} if there exist concentric Euclidean balls $B$ and $B'$, such that $B \subseteq K \subseteq B'$, and $\radius(B) / \radius(B') \ge \gamma$. Thus, a body in canonical form is $(1/d)$-fat and has diameter $\Theta(1)$. We will refer to point $O$ as the \emph{center} of $K$.

The following lemma shows that, up to constant factors, the problem of approximating an arbitrary convex body can be reduced to approximating a convex body in canonical form. The proof follows from a combination of John's Theorem~\cite{John} and Lemma~{3.1} of Agarwal {\etal}~\cite{AHV04} and is included for completeness.

%-----------------------------------------------------------------------
\begin{lemma} \label{lem:canonical}
Let $K$ be a convex body of unit diameter in $\RE^d$. There exists a non-singular affine transformation $T$ such that $T(K)$ is in canonical form and if $P$ is any $(\eps/d)$-approximating polytope to $T(K)$, then $T^{-1}(P)$ is an $\eps$-approximating polytope to $K$.
\end{lemma}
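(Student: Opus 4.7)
The plan is to construct $T$ from the maximum-volume inscribed ellipsoid of $K$ and then to transfer the approximation error through $T^{-1}$ via a Lipschitz argument.

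First, I would invoke John's theorem, which guarantees a (unique) ellipsoid $E$ of maximum volume contained in $K$; let $c$ denote its center. Writing $E = \{c + Av : v \in B^1(O)\}$ for some non-singular linear map $A$, John's theorem gives
\[
	c + A\, B^1(O) \;\subseteq\; K \;\subseteq\; c + d\,(E - c) \;=\; c + A\, B^d(O),
\]
where the factor $d$ on the right is the sharp bound for general (possibly non-symmetric) convex bodies. I would then take $T$ to be the affine map $T(x) = \tfrac{1}{2d}\, A^{-1}(x - c)$. A direct calculation yields $T(E) = B^{1/(2d)}(O)$ and $T\bigl(c + A\, B^d(O)\bigr) = B^{1/2}(O)$, which sandwiches $T(K)$ exactly as required for canonical form.

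The second step is to bound the distortion of $T^{-1}$. Since $T^{-1}(y) = 2d\, A(y) + c$, it suffices to bound the operator norm $\|A\|$, which equals the length of the longest semi-axis of $E$. Because $E \subseteq K$ and $\diam(K) = 1$, that length is at most $1/2$, hence $\|A\| \le 1/2$, and $T^{-1}$ is Lipschitz with constant at most $d$. Lipschitz maps dilate Hausdorff distance by at most their Lipschitz constant, so if $P$ is an $(\eps/d)$-approximating polytope to $T(K)$---so that its Hausdorff distance to $T(K)$ is at most $\eps/d$ (recall $\diam(T(K)) \le 1$)---then the Hausdorff distance between $T^{-1}(P)$ and $K = T^{-1}(T(K))$ is at most $d \cdot (\eps/d) = \eps$, which is exactly the conclusion.

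I do not anticipate any serious obstacle. The only delicate point is that the sharp constant in John's theorem for general convex bodies is $d$ rather than $\sqrt{d}$ (which applies only in the centrally symmetric case); this is precisely what dictates the factor $\eps/d$ in the hypothesis. Everything else reduces to routine linear algebra and the standard fact that Lipschitz maps contract the Hausdorff distance by at most their Lipschitz constant.
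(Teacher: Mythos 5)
Your proof is correct and takes essentially the same route as the paper's: apply John's theorem to obtain the maximum-volume inscribed ellipsoid, define $T$ as the affine map sending that ellipsoid to $B^{1/2d}(O)$, and bound the distortion of $T^{-1}$ by $d$ using the fact that $E$'s semi-axes are at most $1/2$. You are a bit more explicit than the paper about the outer John containment $K \subseteq c + d(E-c)$ and about the Lipschitz-Hausdorff transfer, but the underlying argument is identical.
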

%-----------------------------------------------------------------------

%-----------------------------------------------------------------------
\begin{proof}
Let $E$ denote a maximum volume ellipsoid enclosed within $K$ (that is, the John ellipsoid). Since $K$ is of unit diameter, $E$'s semi-principal axes are all of length at most $1/2$. Consider a frame centered at $E$'s center and whose axes coincide with $E$'s semi-principal axes. Let $T$ be an affine transformation that maps this frame's origin to the origin of the space, and scales all of the frame's basis vectors to length $1/2d$. This affine transformation maps $E$ to $B^{1/2d}(O)$. Since each of the frame's basis vectors is scaled from a length of at most $1/2$ to a length of $1/2d$, it follows that $T$ maps any vector $v$ to a vector of length at least $\|v\|/d$. Thus, $T^{-1}$ maps any vector $v$ to a vector of length at most $d \|v\|$. Therefore, if $P$ is any $(\eps/d)$-approximating polytope to $T(K)$, $T^{-1}(P)$ is an $\eps$-approximating polytope to $T^{-1}(T(K)) = K$, as desired.
\end{proof}
%-----------------------------------------------------------------------

We assume henceforth that $K$ is given in canonical form and that $\eps$ has been appropriately scaled. This scaling only affects the constant factors in our asymptotic bounds.

A \emph{cap} $C$ is defined to be the nonempty intersection of the convex body $K$ with a halfspace $H$ (see Figure~\ref{f:cap}(a)). Let $h$ denote the hyperplane bounding $H$. We define the \emph{base} of $C$ to be $h \cap K$. The \emph{apex} of $C$ is any point in the cap such that the supporting hyperplane of $K$ at this point is parallel to $h$. The \emph{width} of $C$ is the distance between $h$ and this supporting hyperplane. Given any cap $C$ of width $w$ and a real $\lambda \ge 0$, we define its \emph{$\lambda$-expansion}, denoted $C^{\lambda}$, to be the cap of $K$ cut by a hyperplane parallel to and at distance $\lambda w$ from this supporting hyperplane. (Note that $C^{\lambda} = K$, if $\lambda w$ exceeds the width of $K$ along the defining direction.) An easy consequence of convexity is that, for $\lambda \ge 1$, $C^{\lambda}$ is a subset of the region obtained by scaling $C$ by a factor of $\lambda$ about its apex. It follows that, for $\lambda \ge 1$, $\vol(C^{\lambda}) \le \lambda^d \cdot \vol(C)$. For a given $\eps > 0$, let $K(\eps) \subset K$ denote the points of $K$ within distance at most $\eps$ from $\partial K$ (equivalently, the union of all $\eps$-width caps).

%-----------------------------------------------------------------------
\begin{figure}[htbp]
	\centerline{\includegraphics[scale=.75]{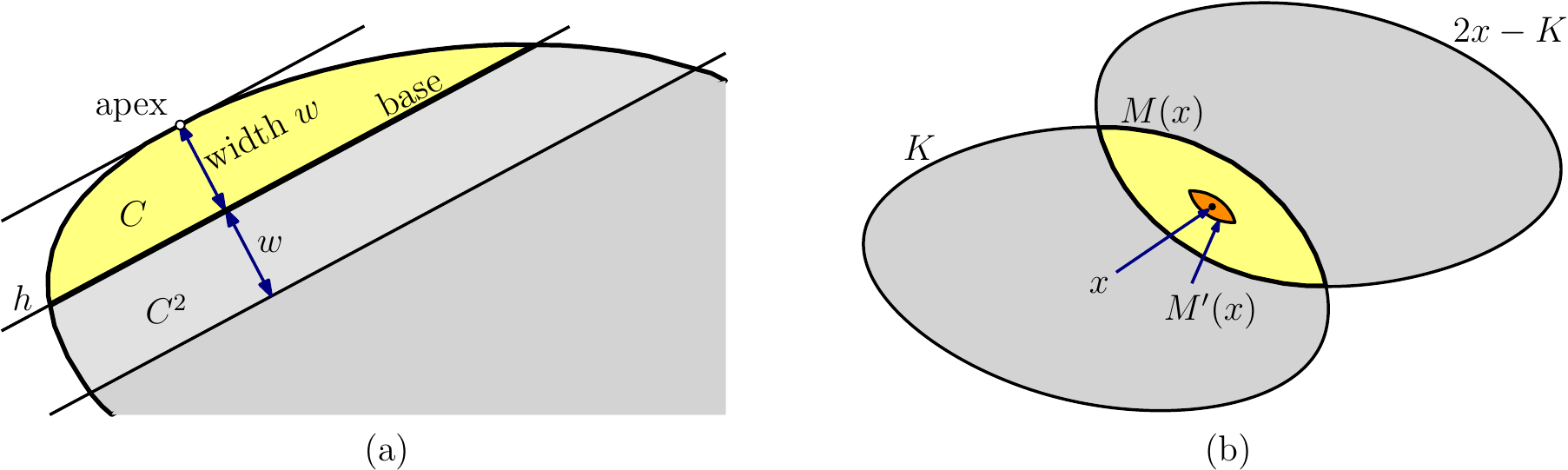}}
	\caption{\label{f:cap}(a) Cap concepts and (b) Macbeath regions.}
\end{figure}
%-----------------------------------------------------------------------

Given a point $x \in K$ and real parameter $\lambda \ge 0$, the \emph{Macbeath region} $M^{\lambda}(x)$ (also called an \emph{M-region}) is defined as:
\[
  M^{\lambda}(x) ~ = ~ x + \lambda ((K-x) \cap (x-K)).
\]
It is easy to see that $M^{1}(x)$ is the intersection of $K$ and the reflection of $K$ around $x$ (see Figure~\ref{f:cap}(b)), and so $M^{1}(x)$ is centrally symmetric about $x$. $M^{\lambda}(x)$ is a scaled copy of $M^{1}(x)$ by the factor $\lambda$ about $x$. We refer to $x$ as the \emph{center} of $M^{\lambda}(x)$ and to $\lambda$ as its \emph{scaling factor}. As a convenience, we define $M(x) = M^1(x)$ and $M'(x) = M^{1/5}(x)$.

We begin with two lemmas that encapsulate relevant properties of Macbeath regions. Both were proved originally by Ewald, Larman, and Rogers~\cite{ELR70}, but our statements follow the forms given by Br\"{o}nnimann, Chazelle, and Pach~\cite{BCP93}. (Lemmas~\ref{lem:mac-mac} and~\ref{lem:mac2} below are restatements of Lemmas~{2.5} and~{2.6} from~\cite{BCP93}, respectively.)

%-----------------------------------------------------------------------
\begin{lemma} \label{lem:mac-mac}
Let $K$ be a convex body. If $x, y \in K$ such that $M'(x) \cap M'(y) \neq \emptyset$, then $M'(y) \subseteq M(x)$. 
\end{lemma}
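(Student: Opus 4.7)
\medskip\noindent\textbf{Proof proposal.}
My plan is to fix an arbitrary $p \in M'(y)$ and show $p \in M(x)$. By the definition $M(x) = \{q \ST q \in K,\ 2x - q \in K\}$, membership in $M(x)$ requires two things: $p \in K$ and $2x - p \in K$. Since $M'(y) \subseteq M(y) \subseteq K$, the first condition is automatic, so the proof reduces entirely to showing that $2x - p \in K$.

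To make the hypothesis concrete, I would pick any $z \in M'(x) \cap M'(y)$ (guaranteed to exist) and unpack the three Macbeath memberships, writing $z = x + \frac{1}{5} a = y + \frac{1}{5} b$ and $p = y + \frac{1}{5} c$, where the six ``symmetry partners'' $x \pm a$, $y \pm b$, $y \pm c$ all lie in $K$. Equating the two expressions for $z$ yields the key relation $y - x = \frac{1}{5}(a - b)$, and substituting this into $2x - p = 2x - y - \frac{1}{5} c$ gives
\[
    2x - p \;=\; x + \tfrac{1}{5}(b - a - c) .
\]

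The core obstacle, and the technical heart of the lemma, is to certify that this particular point lies in $K$ using only the six $K$-membership facts above. My plan is to exhibit $2x - p$ as an explicit convex combination of points drawn from $\{x,\, x \pm a,\, y \pm b,\, y \pm c\}$, which reduces to a small linear system: one equation for each vector direction $a$, $b$, $c$, together with the normalization constraint. Solving this system produces valid nonnegative coefficients (for example, $2x - p = \frac{3}{10}(x - a) + \frac{3}{10}(y + b) + \frac{1}{5}(y - c) + \frac{1}{5}\,x$), whose correctness is a routine verification after substituting $y = x + \frac{1}{5}(a - b)$, and convexity of $K$ then completes the argument. The only real difficulty is producing the coefficients; the scaling factor $\frac{1}{5}$ in the definition of $M'$ is precisely the amount of ``slack'' needed so that a nonnegative combination exists, which is also why the analogous statement fails if $M'$ is replaced by $M^{\lambda}$ for $\lambda$ too close to $1$.
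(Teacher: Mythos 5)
Your proof is correct. Note that the paper does not prove this lemma at all --- it cites it as a restatement of a result of Ewald, Larman, and Rogers (following Br\"{o}nnimann, Chazelle, and Pach) --- so there is no in-paper argument to compare against, and your self-contained verification is a welcome supplement. The structure is the standard algebraic one: unpack the three Macbeath memberships into the six witness points $x \pm a$, $y \pm b$, $y \pm c \in K$, derive $y - x = \frac{1}{5}(a - b)$ from the common point $z$, reduce the goal to showing $2x - p = x + \frac{1}{5}(b - a - c) \in K$, and exhibit this point as a convex combination of witnesses. Your coefficients check out: $\frac{3}{10} + \frac{3}{10} + \frac{1}{5} + \frac{1}{5} = 1$, all nonnegative, and after substituting $y = x + \frac{1}{5}(a-b)$ the combination $\frac{3}{10}(x - a) + \frac{3}{10}(y + b) + \frac{1}{5}(y - c) + \frac{1}{5}x$ collapses to $x - \frac{1}{5}a + \frac{1}{5}b - \frac{1}{5}c$, exactly $2x - p$; since $x$, $x-a$, $y+b$, $y-c$ all lie in $K$, convexity finishes the argument. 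One small caveat on your closing remark: $1/5$ is not ``precisely'' the slack needed. Setting up your linear system with a general scaling factor $\lambda$ in place of $1/5$, a nonnegative solution exists whenever $\lambda^2 + 4\lambda - 1 \le 0$, i.e., for all $\lambda \le \sqrt{5} - 2 \approx 0.236$. So $1/5$ is a clean round constant comfortably below the breaking point of this argument, not a tight threshold.
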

%-----------------------------------------------------------------------

%-----------------------------------------------------------------------
\begin{lemma} \label{lem:mac2}
Let $K \subset \RE^d$ be a convex body in canonical form, and let $\Delta_0 = 1/(6d)$ be a constant. Let $C$ be a cap of $K$ of width at most $\Delta_0$.  Let $x$ denote the centroid of the base of this cap. Then $C \subseteq M^{3d}(x)$.
\end{lemma}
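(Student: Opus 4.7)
Writing the statement $C \subseteq M^{3d}(x)$ out using the definition of the Macbeath region, we see that for each $y \in C$ we must verify two containments: $x + (y-x)/(3d) \in K$ and $x - (y-x)/(3d) \in K$. The first is a convex combination of $x$ and $y$, both of which lie in $K$, so it follows immediately from convexity. The entire content of the lemma therefore reduces to showing that $z := x + (x-y)/(3d) \in K$. Since $y$ lies on the cap side of $h$ and $x \in h$, the point $z$ lies on the \emph{opposite} side of $h$ from the cap, so the argument must exploit both the centroid property of $x$ and the extent of $K$ away from the cap.

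The main obstacle, and the heart of the proof, is identifying the right way to express $z$ as a convex combination of points that are known to lie in $K$. The plan is to use three such points: $y$ itself, the center $O$, and a point $\tilde{q} \in B$ constructed from $y$ via a Gr\"{u}nbaum-type centroid argument inside the base. Before exhibiting the combination, I first establish the key quantitative fact that controls $O$'s location. The inclusion $B^{1/(2d)}(O) \subseteq K$ must fit below the supporting hyperplane of $K$ parallel to $h$ that passes through the apex of $C$; since that apex lies at distance $w \le \Delta_0$ from $h$ on the cap side, $O$ must lie on the opposite side of $h$ at perpendicular distance at least $\eta \ge 1/(2d) - w \ge 1/(3d)$ from $h$. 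This is precisely the way in which the width hypothesis $w \le \Delta_0 = 1/(6d)$ enters.

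Next, letting $\tau \in [0, w]$ denote the perpendicular distance of $y$ from $h$, the segment $[y, O] \subseteq K$ crosses $h$ at the point $q = (\eta y + \tau O)/(\eta + \tau)$, which lies in $K \cap h = B$. Since $x$ is the centroid of the $(d-1)$-dimensional convex body $B$, the classical centroid asymmetry inequality supplies the reflected point $\tilde{q} := x + (x - q)/(d - 1) \in B \subseteq K$. Using the defining relation $x = ((d-1)\tilde{q} + q)/d$ and expanding $z = ((3d+1)/(3d)) x - y/(3d)$, a routine substitution expresses $z$ as an affine combination $z = c_1 y + c_2 \tilde{q} + c_3 O$ with
\[
c_1 = \frac{(2d+1)\eta - d\tau}{3d^2(\eta+\tau)}, \quad c_2 = \frac{(3d+1)(d-1)}{3d^2}, \quad c_3 = \frac{(3d+1)\tau}{3d^2(\eta+\tau)},
\]
and one checks that $c_1 + c_2 + c_3 = 1$.

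With the combination in hand, the remaining verification is short: $c_2$ and $c_3$ are manifestly nonnegative, while $c_1 \ge 0$ reduces to $(2d+1)\eta \ge d\tau$, which follows from $(2d+1)\eta \ge (2d+1)/(3d) > 1/6 \ge d\tau$ with significant slack. Thus $z$ is a genuine convex combination of the three points $y, \tilde{q}, O \in K$, so $z \in K$, and the lemma is proved.
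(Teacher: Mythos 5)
Your proof is correct. The paper does not prove this lemma itself---it is cited verbatim from Br\"{o}nnimann, Chazelle, and Pach (and ultimately Ewald, Larman, and Rogers)---so you are supplying a self-contained derivation of a borrowed fact rather than a rival argument. The two ingredients you use are exactly the right ones: the canonical-form inclusion $B^{1/(2d)}(O) \subseteq K$ pins $O$ at distance $\eta \geq 1/(2d) - w \geq 1/(3d)$ below the base hyperplane $h$, and the Minkowski--Radon centroid theorem (chord through the centroid of a $(d-1)$-dimensional convex body is split in ratio at most $d-1:1$) supplies the reflected base point $\tilde{q} \in B$. I verified the algebra: from $x = ((d-1)\tilde{q}+q)/d$ and $q = (\eta y + \tau O)/(\eta+\tau)$, the expansion of $z = \tfrac{3d+1}{3d}x - \tfrac{1}{3d}y$ does yield the stated $c_1, c_2, c_3$, they sum to $1$, and $c_1 \geq 0$ reduces to $(2d+1)\eta \geq d\tau$, which holds with room to spare under $\eta \geq 1/(3d)$ and $\tau \leq 1/(6d)$. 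The only step worth a sentence of justification if this were to appear in print is the passage from the Minkowski--Radon ratio bound to the containment $x + (x-q)/(d-1) \in B$, which requires observing that $q$ lies on the segment from $x$ to the boundary point $p_1$ of $B$ in its direction, so $\|x-q\| \leq \|x-p_1\| \leq (d-1)\|x-p_2\|$; but this is standard and you clearly intend it.
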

%-----------------------------------------------------------------------

The following lemma is an immediate consequence of the definition of Macbeath region.

%-----------------------------------------------------------------------
\begin{lemma} \label{lem:mac1}
Let $K$ be a convex body and $\lambda > 0$. If $x$ is a point in a cap $C$ of $K$, then $M^{\lambda}(x) \cap K \subseteq C^{1+\lambda}$. Furthermore, if $\lambda \le 1$, then $M^{\lambda}(x) \subseteq C^{1+\lambda}$.
\end{lemma}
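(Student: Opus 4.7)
The plan is to reduce the statement to a one-dimensional inequality along the direction normal to the cap's defining hyperplane. Let $h$ be the hyperplane bounding the halfspace $H$ that defines $C$, and let $u$ be the unit normal to $h$ pointing into $H$. Write $h = \{y : y \cdot u = c\}$; then $C = K \cap \{y : y \cdot u \ge c\}$ and the supporting hyperplane of $K$ at the apex of $C$ is $\{y : y \cdot u = c + w\}$, where $w$ is the width of $C$. From the definition of expansion, $C^{1+\lambda} = K \cap \{y : y \cdot u \ge c - \lambda w\}$. So the claim $M^{\lambda}(x)\cap K \subseteq C^{1+\lambda}$ reduces to showing that every $v \in M^{\lambda}(x)\cap K$ satisfies $v \cdot u \ge c - \lambda w$.

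Given such a $v$, I would use the definition of the Macbeath region to write $v = x + \lambda z$ with $x + z \in K$ and $x - z \in K$. Since both $x+z$ and $x-z$ lie in $K$, both have inner product with $u$ at most $c + w$ (the apex value). The inequality $(x - z) \cdot u \le c + w$ rearranges to $z \cdot u \ge x \cdot u - (c+w)$. Combined with $x \in C$, which gives $x \cdot u \ge c$, this yields $z \cdot u \ge -w$. Multiplying by $\lambda \ge 0$ and adding $x \cdot u \ge c$ gives exactly $v \cdot u = x \cdot u + \lambda\, z \cdot u \ge c - \lambda w$, completing the first part.

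For the second part, the goal is to drop the intersection with $K$ when $\lambda \le 1$. I would simply observe that if $v = x + \lambda z \in M^{\lambda}(x)$, then $x \in K$ and $x+z \in K$, so by convexity $v = (1-\lambda)x + \lambda(x+z) \in K$. Hence $M^{\lambda}(x) \subseteq K$ and the inclusion $M^{\lambda}(x) \subseteq C^{1+\lambda}$ follows from the first part.

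There is no real obstacle here; the proof is a direct unpacking of the definitions of cap, expansion, and Macbeath region. The only subtlety worth stating carefully is the sign convention when setting up the halfspace and the interpretation of $C^{1+\lambda}$ as a cap whose cutting hyperplane lies on the opposite side of $h$ from the apex at signed distance $\lambda w$, which is what allows the lower bound $v \cdot u \ge c - \lambda w$ to be the correct containment criterion.
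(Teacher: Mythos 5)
Your proof is correct. The paper offers no explicit argument here (it just calls the lemma ``an immediate consequence of the definition of Macbeath region''), and your argument is precisely the natural unpacking: the inequality $z\cdot u\ge -w$ coming from $x-z\in K$, combined with $x\cdot u\ge c$, gives $v\cdot u\ge c-\lambda w$, and the convexity step $v=(1-\lambda)x+\lambda(x+z)\in K$ for $\lambda\le 1$ handles the second claim.
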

%-----------------------------------------------------------------------

The next lemma is useful in situations when we know that a Macbeath region partially overlaps a cap of $K$. It allows us to conclude that a constant factor expansion of the cap will fully contain the Macbeath region.

%-----------------------------------------------------------------------
\begin{lemma} \label{lem:cap-mac}
Let $K$ be a convex body. Let $C$ be a cap of $K$ and $x$ be a point in $K$ such that $C \cap M'(x) \neq \emptyset$. Then $M'(x) \subseteq C^2$. 
\end{lemma}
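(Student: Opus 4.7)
The plan is to combine the central symmetry of $M'(x)$ about $x$ with Lemma~\ref{lem:mac1}. If we knew outright that $x \in C$, Lemma~\ref{lem:mac1} would immediately give $M^{1/5}(x) \subseteq C^{6/5} \subseteq C^2$. The hypothesis only provides a common point $y \in C \cap M'(x)$, so $x$ itself may lie outside $C$. However, $x$ cannot lie very far outside, and my strategy is to first show $x \in C^{5/4}$, and then apply Lemma~\ref{lem:mac1} to the slightly enlarged cap $C^{5/4}$.

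To establish $x \in C^{5/4}$, I would unfold the definition of $y \in M^{1/5}(x)$: writing $y = x + \frac{1}{5}v$ with $x \pm v \in K$, one obtains $5y - 4x = x + v \in K$. Let $f$ denote the signed distance from the supporting hyperplane of $K$ at $C$'s apex, so that $f \ge 0$ on $K$, $C = \{p \in K \ST f(p) \le w\}$, and $C^{\lambda} = \{p \in K \ST f(p) \le \lambda w\}$, where $w$ is the width of $C$. Since $5y - 4x \in K$, linearity of $f$ gives $5 f(y) - 4 f(x) \ge 0$, and combining with $f(y) \le w$ yields $f(x) \le \frac{5}{4} f(y) \le \frac{5w}{4}$. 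Hence $x \in C^{5/4}$.

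The rest is bookkeeping. Since $C^{5/4}$ is a cap of $K$ of width $\frac{5w}{4}$ containing $x$, Lemma~\ref{lem:mac1} applied to this cap with $\lambda = \frac{1}{5}$ gives $M^{1/5}(x) \subseteq (C^{5/4})^{6/5}$. Expansions of a fixed cap compose multiplicatively, since both $(C^a)^b$ and $C^{ab}$ consist of the points of $K$ at $f$-distance at most $abw$ from the apex's supporting hyperplane. Therefore $M'(x) \subseteq C^{(5/4)(6/5)} = C^{3/2} \subseteq C^2$, as required.

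I do not anticipate any serious obstacle; the argument rests on central symmetry of Macbeath regions, convexity of $K$, linearity of the signed distance $f$, and the multiplicative composition of cap expansions. The one tuning parameter is the intermediate factor $5/4$, and this is forced by the factor of $5$ in $M' = M^{1/5}$: it is the smallest expansion of $C$ guaranteed to contain $x$, and it keeps the final expansion factor $(5/4)(6/5) = 3/2$ comfortably below $2$.
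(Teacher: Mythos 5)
Your argument is correct, and it is a genuinely different route from the paper's. The paper's proof picks any $y \in C \cap M'(x)$, notes that $M'(x) \cap M'(y) \neq \emptyset$ holds trivially, invokes the Ewald--Larman--Rogers result (Lemma~\ref{lem:mac-mac}) to get $M'(x) \subseteq M(y)$, and then applies Lemma~\ref{lem:mac1} to $y \in C$ with $\lambda = 1$ to conclude $M(y) \subseteq C^2$. You instead work directly with the signed-distance functional $f$: from $y \in M'(x)$ you extract $5y - 4x \in K$, use the affine identity $f(5y-4x) = 5f(y) - 4f(x)$ (valid because the coefficients sum to $1$; strictly speaking $f$ is affine rather than linear, a minor wording slip) to place $x \in C^{5/4}$, and then apply Lemma~\ref{lem:mac1} to the cap $C^{5/4}$ with $\lambda = 1/5$ and compose expansions to land in $C^{3/2} \subseteq C^2$. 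The paper's route is shorter given its library of Macbeath lemmas and is the stylistically natural one for a paper that leans on Lemma~\ref{lem:mac-mac} elsewhere; your route is more elementary (it avoids Lemma~\ref{lem:mac-mac} entirely, which is a substantive result rather than a definitional consequence) and even yields the sharper containment $M'(x) \subseteq C^{3/2}$, though the extra sharpness is not exploited in the paper. One small point worth being explicit about: the multiplicative composition $(C^{a})^{b} = C^{ab}$ as you state it presumes $aw$ does not exceed $K$'s width in the defining direction; in the degenerate case where it does, $C^{5/4} = C^{2} = K$ and the conclusion holds trivially, so nothing is lost.
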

%-----------------------------------------------------------------------

%-----------------------------------------------------------------------
\begin{proof}
Let $y$ be any point in $C \cap M'(x)$. Since $M'(x) \cap M'(y) \neq \emptyset$ obviously holds, we can apply Lemma~\ref{lem:mac-mac} to conclude that $M'(x) \subseteq M(y)$. By Lemma~\ref{lem:mac1} (with $\lambda = 1$), $M(y) \subseteq C^2$. It follows that $M'(x) \subseteq C^2$.
\end{proof}
%-----------------------------------------------------------------------

Next, we give two straightforward lemmas dealing with scaling of centrally symmetric convex bodies. As Macbeath regions are centrally symmetric, these lemmas will be useful to us in conjunction with their standard properties. A proof of Lemma~\ref{lem:sym1} appears in B\'{a}r\'{a}ny~\cite{Bar89}. For any centrally symmetric convex body $A$, define $A^{\lambda}$ to be the body obtained by scaling $A$ by a factor of $\lambda$ about its center. 

%-----------------------------------------------------------------------
\begin{lemma} \label{lem:sym1}
Let $\lambda \ge 1$. Let $A$ and $B$ be centrally symmetric convex bodies such that $A \subseteq B$. Then $A^\lambda \subseteq B^\lambda$.
\end{lemma}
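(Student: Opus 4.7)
The plan is to reduce the containment to the case where the two bodies share a common center. Let $a$ and $b$ denote the centers of $A$ and $B$, and set $S = A - a$ and $T = B - b$; these are convex bodies symmetric about the origin, and by definition $A^{\lambda} = a + \lambda S$ and $B^{\lambda} = b + \lambda T$. The goal thus becomes showing $a + \lambda S \subseteq b + \lambda T$.

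The first step I would prove is the sub-claim $S \subseteq T$, i.e., the centered copies still satisfy the containment. Starting from $A \subseteq B$, which gives $(a-b) + S \subseteq T$, and using central symmetry to negate (so that $(b-a) + S \subseteq T$ as well, using $-S = S$), convexity of $T$ lets one average the two translates: their midpoint $\tfrac{1}{2}\bigl((a-b) + S\bigr) + \tfrac{1}{2}\bigl((b-a) + S\bigr) = S$ must also lie in $T$. The central symmetry of \emph{both} bodies is indispensable here, since the lemma fails for general convex bodies.

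Given the sub-claim, the second step is to take an arbitrary $p \in A^\lambda$, write $p = a + \lambda v$ with $v \in S$, and verify that $(p - b)/\lambda = (a-b)/\lambda + v$ lies in $T$. Two ingredients suffice: $v \in T$ by the sub-claim, and $(a-b) + v \in T$ because $a + v \in A \subseteq B = b + T$. Since $\lambda \geq 1$, the weight $1/\lambda$ lies in $[0,1]$, so
\[
\frac{a-b}{\lambda} + v \;=\; \frac{1}{\lambda}\bigl((a-b) + v\bigr) + \Bigl(1 - \frac{1}{\lambda}\Bigr) v
\]
is a convex combination of two points of $T$, and therefore belongs to $T$. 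Multiplying through by $\lambda$ gives $p - b \in \lambda T$, i.e., $p \in B^\lambda$. Note that the hypothesis $\lambda \geq 1$ is used precisely here, to make $1/\lambda$ a valid convex combination weight; indeed a simple one-dimensional example shows the lemma fails for $\lambda < 1$.

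The main obstacle is the sub-claim $S \subseteq T$: it is the one place where one must exploit central symmetry in a non-trivial way, since in general the translation $a - b$ between the two centers is unrelated to either body. Once the sub-claim is in hand, the remainder is the one-line convex combination displayed above.
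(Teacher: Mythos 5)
Your proof is correct. Note that the paper itself does not include a proof of this lemma---it simply cites B\'{a}r\'{a}ny~\cite{Bar89}---so there is no in-paper argument to compare against, but your self-contained argument is sound. The reduction to origin-centered translates $S = A - a$ and $T = B - b$, the use of central symmetry of \emph{both} bodies plus convexity of $T$ to extract $S \subseteq T$ from $(a-b) + S \subseteq T$, and the final convex combination with weights $1/\lambda$ and $1 - 1/\lambda$ (valid precisely because $\lambda \ge 1$) all check out. A small cosmetic remark: in the sub-claim, rather than averaging the two set-translates via a Minkowski-sum identity (which implicitly invokes $\tfrac12 S + \tfrac12 S = S$, itself a convexity fact), it is slightly cleaner to fix a single $s \in S$, observe $(a-b)+s \in T$ and $(b-a)+s \in T$, and then note that their midpoint $s$ lies in $T$ by convexity; this gives $S \subseteq T$ pointwise without appealing to Minkowski-sum algebra. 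Either way the logic is correct, and your observations that the claim fails without central symmetry and that $\lambda \ge 1$ is essential are both apt.
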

%-----------------------------------------------------------------------

%-----------------------------------------------------------------------
\begin{lemma} \label{lem:sym2}
Let $\lambda \ge 1$. Let $A$ be a centrally symmetric convex body. Let $A'$ be the body obtained by scaling $A$ by a factor of $\lambda$ about any point in $A$. Then $A' \subseteq A^{2\lambda-1}$.
\end{lemma}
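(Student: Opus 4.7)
The plan is to reduce the problem to a simple convex combination argument by exploiting central symmetry. Without loss of generality I would translate coordinates so that the center of $A$ coincides with the origin; this is harmless because both $A'$ and $A^{2\lambda-1}$ translate in the same way with the center, and the definition $A^{2\lambda-1}$ is invariant under this choice. After the translation, $A = -A$, and the scaling center $p$ satisfies both $p \in A$ and $-p \in A$.

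Next, I would pick an arbitrary $y \in A'$ and unpack the definition of scaling by $\lambda$ about $p$: there exists $x \in A$ with $y = p + \lambda(x - p) = (1-\lambda)p + \lambda x$. Using $-p \in A$, I rewrite this as $y = (\lambda - 1)(-p) + \lambda x$. The key numerical observation is that the two coefficients sum to $(\lambda - 1) + \lambda = 2\lambda - 1$, so
\[
  \frac{y}{2\lambda - 1} ~=~ \frac{\lambda - 1}{2\lambda - 1}(-p) + \frac{\lambda}{2\lambda - 1}\, x,
\]
which is a genuine convex combination (both weights are nonnegative and sum to $1$) of the two points $-p$ and $x$, each in $A$. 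By convexity of $A$, this combination lies in $A$, i.e., $y \in (2\lambda - 1) A = A^{2\lambda - 1}$ (recalling that after translation the center of $A$ is the origin).

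Since $y$ was arbitrary, this gives the desired inclusion $A' \subseteq A^{2\lambda - 1}$. The degenerate case $\lambda = 1$ is trivial, as then $A' = A = A^{1}$, so no separate treatment is required. There is really no serious obstacle here: the only thing to notice is the algebraic identity $(\lambda - 1) + \lambda = 2\lambda - 1$, which is precisely what makes the factor $2\lambda - 1$ appear in the statement; once central symmetry lets us flip $p$ to $-p$, the bound emerges from a one-line convex combination.
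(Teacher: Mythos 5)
Your proof is correct and is essentially identical to the paper's own argument: both place the origin at the center of $A$, write a generic point of $A'$ as $(\lambda - 1)(-p) + \lambda x$, observe that dividing by $2\lambda - 1$ yields a convex combination of $-p$ and $x$ (both in $A$ by central symmetry), and conclude membership in $A^{2\lambda - 1}$.
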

%-----------------------------------------------------------------------

%-----------------------------------------------------------------------
\begin{proof}
We take the origin to be at the center of $A$. Let $A'$ be the body obtained by scaling $A$ by a factor of $\lambda$ about a point $a \in A$. Any point $u$ in $A'$ is of the form $a + \lambda (x-a)$, where $x \in A$. This can be expressed as
\[
(2 \lambda - 1) \left[ \frac{\lambda}{2 \lambda - 1} x + \frac{\lambda - 1}{2 \lambda - 1} (-a) \right].
\]
Since $\lambda \ge 1$, the point $(\lambda/(2\lambda - 1)) x + ((\lambda-1)/(2\lambda-1)) (-a)$ lies on the segment joining $x$ and $-a$. Since both $x$ and $-a$ lie within $A$, it follows that $u \in A^{2\lambda-1}$, as desired.
\end{proof}
%-----------------------------------------------------------------------

The following lemma is an easy consequence of Lemmas~\ref{lem:mac2} and \ref{lem:sym2}.

%-----------------------------------------------------------------------
\begin{lemma} \label{lem:mac3}
Let $\lambda \ge 1$ and let $K, C$, and $x$ be as defined in Lemma~\ref{lem:mac2}.
Then $C^\lambda \subseteq M^{3d(2\lambda-1)}(x)$.
\end{lemma}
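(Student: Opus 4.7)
The plan is to chain together three facts already established in the preliminaries: (i) the cap expansion $C^{\lambda}$ is contained in the image of $C$ under a scaling of factor $\lambda$ about the apex of $C$ (stated explicitly in the paragraph just before Lemma~\ref{lem:mac2}); (ii) Lemma~\ref{lem:mac2} gives $C \subseteq M^{3d}(x)$; and (iii) Lemma~\ref{lem:sym2} controls how a centrally symmetric body grows when scaled about an off-center interior point.

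Concretely, I would first let $a$ denote the apex of $C$, which by definition lies in $C$. By fact (i), $C^{\lambda}$ is a subset of the set obtained by scaling $C$ by a factor of $\lambda$ about $a$. Since scaling about a fixed point is monotone with respect to set inclusion and $C \subseteq M^{3d}(x)$ by Lemma~\ref{lem:mac2}, the scaled copy of $C$ is contained in the corresponding scaled copy of $M^{3d}(x)$ about the very same point $a$. Thus $C^{\lambda}$ is contained in the image of $M^{3d}(x)$ under the scaling by factor $\lambda$ about $a$.

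Now I would invoke Lemma~\ref{lem:sym2} with $A = M^{3d}(x)$, which is centrally symmetric about $x$, and with the scaling center $a$, which belongs to $A$ since $a \in C \subseteq M^{3d}(x)$. The lemma yields that this scaled image is contained in $A^{2\lambda - 1}$, namely $M^{3d}(x)$ scaled by factor $2\lambda - 1$ about its own center $x$. Since a Macbeath region $M^{\mu}(x)$ is by definition the scaling of $M^{1}(x)$ by $\mu$ about $x$, one has $(M^{3d}(x))^{2\lambda-1} = M^{3d(2\lambda-1)}(x)$. Chaining the inclusions gives $C^{\lambda} \subseteq M^{3d(2\lambda-1)}(x)$, as desired.

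There is no real obstacle here: the only point requiring any care is ensuring that the scaling center used in Lemma~\ref{lem:sym2} genuinely lies inside the centrally symmetric body being scaled. That hypothesis is satisfied because the apex $a$ of the cap lies in $C$, and $C$ sits inside $M^{3d}(x)$ by Lemma~\ref{lem:mac2}. Everything else is a mechanical composition of the already-proved lemmas.
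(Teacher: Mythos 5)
Your proof is correct and follows essentially the same route as the paper: invoke Lemma~\ref{lem:mac2} to get $C \subseteq M^{3d}(x)$, use the fact that $C^{\lambda}$ lies inside the $\lambda$-scaling of $C$ about the apex, and then apply Lemma~\ref{lem:sym2} to convert a scaling about the (off-center) apex into a scaling about the center $x$. Your explicit check that the apex lies in $M^{3d}(x)$ (so that Lemma~\ref{lem:sym2} applies) is a detail the paper leaves implicit, but the argument is identical.
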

%-----------------------------------------------------------------------

%-----------------------------------------------------------------------
\begin{proof}
By Lemma~\ref{lem:mac2}, $C \subseteq M^{3d}(x)$. Recall that $C^\lambda$ is contained within the region obtained by scaling $C$ by a factor of $\lambda$ about its apex. Applying Lemma~\ref{lem:sym2} (applied to $M^{3d}(x)$ and the apex point), it follows that $C^{\lambda} \subseteq M^{3d(2\lambda-1)}(x)$.
\end{proof}
%-----------------------------------------------------------------------

The well known Lemma~\ref{lem:mac-mac} states that if two $(1/5)$-shrunken Macbeath regions have a nonempty intersection, then a constant factor expansion of one contains the other~\cite{ELR70,BCP93}. We show next that this holds for the associated caps as well. (Note that this does not hold in general for overlapping caps. If two caps $C_1$ and $C_2$ have a nonempty intersection, there is no constant $\beta$ that guarantees that $C_1 \subseteq C_2^{\beta}$.)

%-----------------------------------------------------------------------
\begin{lemma} \label{lem:cap-cap}
Let $\Delta_0$ be the constant of Lemma~\ref{lem:mac2} and let $\lambda \ge 1$ be any real. There exists a constant $\beta \ge 1$ such that the following holds. Let $K \subset \RE^d$ be a convex body in canonical form. Let $C_1$ and $C_2$ be any two caps of $K$ of width at most $\Delta_0$. Let $x_1$ and $x_2$ denote the centroids of the bases of the caps $C_1$ and $C_2$, respectively. If $M'(x_1) \cap M'(x_2) \neq \emptyset$, then $C_1^{\lambda} \subseteq C_2^{\beta \lambda}$.
\end{lemma}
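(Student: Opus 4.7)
My plan is to chain together containments: first bound $C_1^{\lambda}$ by a Macbeath region centered at $x_1$, then use the overlap hypothesis to transfer to a Macbeath region centered at $x_2$, and finally pull back to an expansion of $C_2$.

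First I would apply Lemma~\ref{lem:mac3} to $C_1$ (whose width is at most $\Delta_0$) to obtain $C_1^{\lambda} \subseteq M^{3d(2\lambda-1)}(x_1)$. Next, since $M'(x_1) \cap M'(x_2) \neq \emptyset$, Lemma~\ref{lem:mac-mac} (with the roles of $x, y$ swapped) gives $M'(x_1) \subseteq M(x_2)$. Both bodies are centrally symmetric, about $x_1$ and $x_2$ respectively, so Lemma~\ref{lem:sym1} allows me to scale each about its own center by any common factor $\ge 1$ while preserving inclusion. Choosing this factor to be $15 d (2 \lambda - 1)$ (which is $\ge 1$ for $\lambda \ge 1$) sends $M'(x_1)$ to $M^{3d(2\lambda-1)}(x_1)$ and $M(x_2)$ to $M^{15d(2\lambda-1)}(x_2)$, yielding
\[
  M^{3 d (2 \lambda - 1)}(x_1) \subseteq M^{15 d (2 \lambda - 1)}(x_2).
\]

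Since $C_1^{\lambda} \subseteq K$, intersecting with $K$ gives $C_1^{\lambda} \subseteq M^{15 d (2\lambda-1)}(x_2) \cap K$. The centroid $x_2$ lies in $C_2$ (it belongs to the base of $C_2$, which is contained in $C_2$), so Lemma~\ref{lem:mac1} with $\mu = 15 d (2\lambda - 1)$ yields $M^{\mu}(x_2) \cap K \subseteq C_2^{1+\mu}$. For $\lambda \ge 1$ and $d \ge 1$ one checks $1 + 15 d (2\lambda - 1) \le 30 d \lambda$, so $C_1^{\lambda} \subseteq C_2^{30 d \lambda}$, and taking $\beta = 30 d$ suffices.

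The only delicate point, which I expect to be the principal obstacle, is applying Lemma~\ref{lem:sym1} when the two centrally symmetric bodies have distinct centers ($x_1$ and $x_2$): the scalings there are each taken about the body's own center, so the resulting Macbeath scaling factors on the two sides differ, and I must reconcile the extra factor of $5$ from translating between $M'$ and $M$. This is exactly what forces the final constant $\beta$ to carry that additional factor. Apart from this, the argument is bookkeeping, and $\beta$ depends only on $d$ as required.
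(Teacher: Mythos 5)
Your proposal is correct and follows essentially the same chain of containments as the paper's own proof: Lemma~\ref{lem:mac3} to enclose $C_1^{\lambda}$ in a Macbeath region at $x_1$, Lemma~\ref{lem:mac-mac} to pass to $x_2$, Lemma~\ref{lem:sym1} to scale (accounting for the factor of $5$ between $M'$ and $M$, exactly as you flag), and Lemma~\ref{lem:mac1} to return to an expansion of $C_2$, arriving at the same constant $\beta = 30d$.
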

%-----------------------------------------------------------------------

%-----------------------------------------------------------------------
\begin{proof}
By Lemma~\ref{lem:mac3}, $C_1^{\lambda} \subseteq M^{\alpha}(x_1)$, where $\alpha = 3d(2\lambda-1)$. Since $M'(x_1)$ and $M'(x_2)$ overlap, by Lemma~\ref{lem:mac-mac}, $M'(x_1) \subseteq M(x_2)$. By definition, $M'(x_1) = M^{1/5}(x_1)$ and so $M^{\alpha}(x_1) = (M'(x_1))^{5\alpha}$. Since $M'(x_1)$ and $M(x_2)$ are centrally symmetric bodies and $M'(x_1) \subseteq M(x_2)$, by Lemma~\ref{lem:sym1}, it follows that $(M'(x_1))^{5\alpha} \subseteq M^{5\alpha}(x_2)$. Putting it together, we obtain
\[
  C_1^{\lambda} 
	~ \subseteq ~ M^{\alpha}(x_1) 
	~     =     ~ (M'(x_1))^{5\alpha} 
	~ \subseteq ~ M^{5\alpha}(x_2).
\]
By Lemma~\ref{lem:mac1}, $M^{5\alpha}(x_2) \cap K \subseteq C_2^{1+5\alpha}$. Since $C_1^{\lambda} \subseteq M^{5\alpha}(x_2)$ and $C_1^{\lambda} \subseteq K$, we have $C_1^{\lambda} \subseteq M^{5\alpha}(x_2) \cap K \subseteq C_2^{1+5\alpha}$. Recalling that $\alpha = 3d(2\lambda-1)$, we have $C_1^{\lambda} \subseteq C_2^{30d\lambda}$. This proves the lemma for constant $\beta = 30d$.
\end{proof}
%-----------------------------------------------------------------------

%=======================================================================
\section{Economical Cap Covering} \label{s:ecc}
%=======================================================================

In this section we present a tight analysis of a width-based variant of B\'{a}r\'{a}ny and Larman's economical cap covering~\cite{BaL88}. The lemma applies generally to any convex body $K$ that has constant diameter and is $\gamma$-fat for some constant $\gamma$ (where the constants may depend on $d$). The proof of this lemma follows from the ideas in~\cite{ELR70, BaL88,Bar89}. Our principal contribution is an optimal bound of $O(1/\eps^{(d-1)/2})$ on the number of bodies needed. 

%-----------------------------------------------------------------------
\begin{lemma}[Width-based economical cap covering lemma] \label{lem:ecc}
Let $\eps > 0$ be a sufficiently small parameter. Let $K \subset \RE^d$ be a convex body in canonical form. There exists a collection $\RR$ of $k = O(1/\eps^{(d-1)/2})$ disjoint centrally symmetric convex bodies $R_1, \ldots, R_k$ (see Figure~\ref{f:cap-cover}(a)) and associated caps $C_1,\ldots, C_k$ such that the following hold (for some constants $\beta$ and $\lambda$, which depend only on $d$):
\begin{enumerate}
\item For each $i$, $C_i$ is a cap of width $\beta \eps$, and $R_i \subseteq C_i \subseteq R_i^\lambda$.

\item Let $C$ be any cap of width $\eps$. Then there is an $i$ such that $R_i \subseteq C$ and $C_i^{1/\beta^2} \subseteq C \subseteq C_i$ (see Figure~\ref{f:cap-cover}(b)).
\end{enumerate}
\end{lemma}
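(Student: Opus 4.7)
The plan is to construct the regions $R_i$ as a maximal family of pairwise disjoint shrunken Macbeath regions centered at base-centroids of caps of width $\eps/\beta$, and then to show that the associated caps $C_i$ of width $\beta\eps$ satisfy both properties. Fix a constant $\beta$ depending only on $d$, large enough for the inclusions below (the analysis shows $\beta = 36d$ suffices), and assume $\eps$ is small enough that $\eps/\beta \le \Delta_0$. For each cap $\tilde C$ of $K$ of width $\eps/\beta$, let $y(\tilde C)$ denote the centroid of its base. By a standard maximality argument (each Macbeath region has positive volume and all lie in the bounded set $K$), select caps $\hat C_1,\ldots,\hat C_k$ of width $\eps/\beta$ so that the regions $R_i := M'(y_i)$, with $y_i := y(\hat C_i)$, are pairwise disjoint and the family is maximal with this property. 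Define $C_i := \hat C_i^{\beta^2}$; this is a cap of width $\beta\eps$ satisfying $C_i^{1/\beta^2} = \hat C_i$.

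Property~1 is immediate from the Macbeath-region machinery of Section~\ref{s:prelim}. Since $y_i \in \hat C_i$, Lemma~\ref{lem:mac1} with $\lambda = 1/5$ gives $R_i \subseteq \hat C_i^{6/5} \subseteq \hat C_i^{\beta^2} = C_i$. Lemma~\ref{lem:mac3} with $\lambda = \beta^2$ gives $C_i = \hat C_i^{\beta^2} \subseteq M^{3d(2\beta^2-1)}(y_i) = R_i^{15d(2\beta^2-1)}$, so Property~1 holds with the constant $\lambda := 15d(2\beta^2-1)$.

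For Property~2, given a cap $C$ of $K$ of width $\eps$, define $\hat C := C^{1/\beta}$ (a cap of width $\eps/\beta$ with $\hat C^\beta = C$) and let $y := y(\hat C)$. By maximality, $M'(y)$ overlaps $R_j = M'(y_j)$ for some~$j$. Lemma~\ref{lem:cap-cap} applied to $\hat C$ and $\hat C_j$ (both of width $\eps/\beta \le \Delta_0$) yields, for every $\mu \ge 1$, the symmetric inclusions $\hat C^\mu \subseteq \hat C_j^{30d\mu}$ and $\hat C_j^\mu \subseteq \hat C^{30d\mu}$. Taking $\beta \ge 36d$, these combine to
\[
  R_j \subseteq \hat C_j^{6/5} \subseteq \hat C^{36d} \subseteq C, \qquad
  C_j^{1/\beta^2} = \hat C_j \subseteq \hat C^{30d} \subseteq C, \qquad
  C = \hat C^\beta \subseteq \hat C_j^{30d\beta} \subseteq \hat C_j^{\beta^2} = C_j,
\]
which is exactly Property~2.

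The main technical obstacle is bounding the cardinality by $k = O(1/\eps^{(d-1)/2})$. The plan is a volume argument: the $R_i$ are pairwise disjoint, each $\vol(R_i)$ is commensurate with $\vol(\hat C_i)$ (combining Lemmas~\ref{lem:mac1} and~\ref{lem:mac2}), and the union $\bigcup_i R_i$ lies in the $O(\eps)$-neighborhood $K(O(\eps))$ of volume $O(\eps)$ (using the canonical-form/fatness of $K$). The naive bound $\vol(R_i) = \Omega(\eps^d)$ yields only $k = O(1/\eps^{d-1})$, so to obtain the tight $O(1/\eps^{(d-1)/2})$ bound one must exploit the fact that caps of width $\eps$ in a fat convex body typically have volume $\Theta(\eps^{(d+1)/2})$, via a careful analysis of the trade-off between cap widths and their base areas (in the spirit of Ewald--Larman--Rogers~\cite{ELR70} and B\'ar\'any~\cite{Bar89}). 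This delicate volume balancing is the lemma's principal novel contribution over~\cite{BaL88}.
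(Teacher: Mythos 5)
Your construction of $\RR$ and the caps $C_i$, and your verification of Properties~1 and~2, match the paper's proof almost exactly: a maximal family of disjoint $(1/5)$-shrunken Macbeath regions centered at base-centroids of width-$(\eps/\beta)$ caps, with $C_i = \hat C_i^{\beta^2}$, and the inclusions coming from Lemmas~\ref{lem:mac1}, \ref{lem:mac3}, and \ref{lem:cap-cap}. (The paper proves the first half of Property~2 via Lemmas~\ref{lem:mac-mac} and~\ref{lem:mac1} instead of a third invocation of Lemma~\ref{lem:cap-cap}, but this is cosmetic, as is your $\beta = 36d$ versus their $\beta = 30d$.)

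The genuine gap is the cardinality bound $k = O(1/\eps^{(d-1)/2})$, which you explicitly defer with a sketch that would not succeed. A volume argument cannot give this bound on its own. The bodies $R_i$ are disjoint and lie in $K(O(\eps))$, whose volume is $\Theta(\eps)$, but $\vol(R_i)$ can be as small as $\Theta(\eps^d)$ (e.g., near a vertex of a polytope), so dividing volumes only yields $O(1/\eps^{d-1})$. Your appeal to ``caps of width $\eps$ typically have volume $\Theta(\eps^{(d+1)/2})$'' is true for smooth bodies like the ball, but false in general (flat faces give $\Theta(\eps)$, sharp corners give $\Theta(\eps^d)$), and no amount of ``delicate volume balancing'' in physical space fixes this: the small-volume Macbeath regions genuinely do not pack efficiently by volume. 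The paper's Lemma~\ref{lem:bound-mac} uses a fundamentally different mechanism: after a pruning step (to make the surviving regions' caps pairwise well-separated by a constant-factor cap expansion), it maps each surviving center $x$ to a point $x'$ on the \emph{Dudley sphere} $S$ of radius $2$ along the outward normal of the cap $C(x)$, and shows via Lemma~\ref{lem:dudley} (a Lipschitz-type estimate for the nearest-boundary-point map) that distinct projected points are at distance $\ge \sqrt{\delta}$ on $S$. A packing argument on $S$ (dimension $d-1$, separation $\sqrt{\delta}$) then gives $O(1/\delta^{(d-1)/2})$. This spherical packing bound is what absorbs the anisotropy that defeats the volume argument, and it is the central technical idea you would need to supply. (The reduction also requires the two-case analysis in the proof of Lemma~\ref{lem:bound-mac}, where one replaces $K$ by an auxiliary convex body $K'$ so that $x$ and $y$ (or $y^*$) become the actual nearest boundary points to $x'$ and $y'$, which is needed to invoke Lemma~\ref{lem:dudley}.)
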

%-----------------------------------------------------------------------

%-----------------------------------------------------------------------
\begin{figure}[htbp]
	\centerline{\includegraphics[scale=.75]{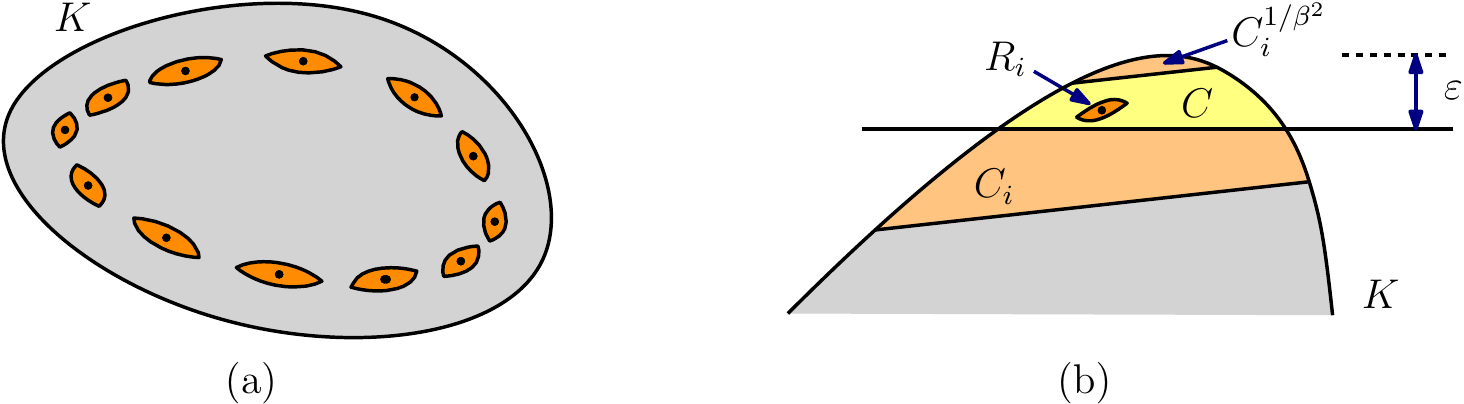}}
	\caption{\label{f:cap-cover}Illustrating Lemma~\ref{lem:ecc}.}
\end{figure}
%-----------------------------------------------------------------------

The $R_i$'s in this lemma are Macbeath regions with scaling factor $1/5$. Since any cap of width $\eps$ is contained in some cap $C_i$, it follows that the $C_i$'s together cover $K(\eps)$. Further, from Property~1, we can see that the sum of the volume of the $C_i$'s is no more than a constant times the volume of $K(\eps)$. It is in this sense that the $C_i$'s constitute an \emph{economical} cap covering.

It is worth mentioning that Property~2 is stronger than similar properties given previously in the literature in the following sense. For any cap of width $\eps$, we show not merely that it is contained within some cap $C_i$ of the cover, but it is effectively ``sandwiched'' between two caps with parallel bases, each of width $\Theta(\eps)$.

A key technical contribution of our paper is the following lemma. It will help us bound the number of bodies needed in the width-based cap covering lemma. Because of its broader utility, this lemma is given in a slightly more general form than is needed here.

%-----------------------------------------------------------------------
\begin{lemma} \label{lem:bound-mac}
Let $K \subset \RE^d$ be a convex body in canonical form. Let $0 < \delta \le \Delta_0/2$, where $\Delta_0$ is the constant of Lemma~\ref{lem:mac2}. Let $\mathcal{C}$ be a set of caps, whose widths lie between $\delta$ and $2\delta$, such that the Macbeath regions $M'(x)$ centered at the centroids $x$ of the bases of these caps are disjoint. Then $|\mathcal{C}| = O(1/\delta^{(d-1)/2})$.
\end{lemma}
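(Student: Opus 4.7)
My plan combines a volume argument in a thin boundary layer with a pigeonhole on the outer-normal directions of the caps. First I localize the Macbeath regions: since $x_i \in C_i$, Lemma~\ref{lem:mac1} with $\lambda = 1/5$ gives $M'(x_i) \subseteq C_i^{6/5}$, and because each cap has width at most $2\delta$, every $M'(x_i)$ lies in the boundary layer $K(12\delta/5)$. Since $K \subseteq B^{1/2}(O)$ has bounded surface area, this layer has volume $O(\delta)$, so disjointness yields $\sum_i \vol(M'(x_i)) = O(\delta)$. Combined with Lemma~\ref{lem:mac2}, which gives $\vol(C_i) \leq (15d)^d \vol(M'(x_i))$, this also implies $\sum_i \vol(C_i) = O(\delta)$.

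The remaining step is to convert this aggregate volume bound into the claimed count bound $|\mathcal{C}| = O(1/\delta^{(d-1)/2})$. A direct per-cap lower bound $\vol(C_i) = \Omega(\delta^{(d+1)/2})$ would suffice, but such a bound fails when $C_i$ lies near a sharp feature of $\partial K$ (for example, near a polytope vertex, $\vol(C_i)$ can be as small as $\Theta(\delta^d)$). To handle such caps uniformly I pigeonhole on outer-normal directions: cover $S^{d-1}$ by $N = \Theta(1/\delta^{(d-1)/2})$ spherical patches of angular radius $\Theta(\sqrt{\delta})$ and assign each $C_i$ to a patch containing its outer normal $\vec{n}_i$. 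If each patch receives at most $O(1)$ caps, then $|\mathcal{C}| = O(N)$ and the bound follows.

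The main obstacle is bounding the load per patch. Within one patch, the assigned caps have almost parallel bases, which together with their uniform $\Theta(\delta)$ widths confine them to a common slab of $K$ of thickness $\Theta(\delta)$. By Lemma~\ref{lem:mac2}, $M(x_i)$ contains a scaled copy of $C_i$ about $x_i$, so the tangential extent of $M'(x_i)$ is a constant fraction of that of $C_i$ itself. The fatness of $K$ guarantees that the tangential cross-section of the slab is comparable to the tangential extent of any cap inside it. A constant-factor packing argument, morally in the spirit of Lemma~\ref{lem:cap-cap} (which prevents the caps of two overlapping Macbeath regions from being too dissimilar), then forces only $O(1)$ disjoint $M'(x_i)$ to fit in the slab. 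Making this quantitative, uniformly across all positions of the patch on $\partial K$ and for all fat bodies in canonical form, is the principal technical challenge.
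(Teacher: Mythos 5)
The high-level idea is the same as the paper's: reduce to a $\Theta(\sqrt{\delta})$-net packing on a sphere of directions, which gives the $O(1/\delta^{(d-1)/2})$ count. The paper implements this by projecting each base centroid $x$ to the ``Dudley sphere'' of radius $2$ along the normal to the base of $C(x)$, and then showing that the images are $\sqrt{\delta}$-separated; this is essentially your pigeonhole on normal directions.

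However, your proposal has a genuine gap at precisely the step you flag: the claim that each patch of angular radius $\Theta(\sqrt{\delta})$ on $S^{d-1}$ receives only $O(1)$ caps is not established, and your slab argument does not quite go through as stated. Two difficulties arise. First, within a single patch the caps can be ``stacked'' in the normal direction: two caps with the \emph{same} normal but widths $\delta$ and $2\delta$ already have disjoint $M'$-regions, so the separation in normal direction alone is not $\sqrt{\delta}$; you must control the number of such stackings, and more generally the number of disjoint $M'$-regions associated with caps that mutually contain each other's centroids. Second, the slab packing argument relies on the tangential cross-section of the slab being comparable to the tangential extent of a single $M'$-region, which is highly dependent on the local geometry of $\partial K$ and is not uniform (it breaks down, e.g., where a large flat face meets a vertex within one $\sqrt{\delta}$-patch). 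The paper resolves both issues simultaneously with a \emph{pruning} step: it greedily selects a constant fraction of the Macbeath regions so that in the pruned subset no selected center $y$ lies inside $C^4(x)$ for an earlier-selected $x$, using a local volume/packing argument to show the pruning loses only a constant factor. After pruning, a careful application of Lemma~\ref{lem:dudley} (with a two-case analysis that passes to an auxiliary convex body $K'$ on which $x$ and $y$, or a translate $y^*$, are genuine closest-boundary-points) yields the $\sqrt{\delta}$ separation on the Dudley sphere. Without some analogue of this pruning and without the Dudley-lemma machinery that sidesteps the need for a uniform local cross-section estimate, the ``$O(1)$ per patch'' claim remains unproven, and indeed you explicitly acknowledge this as ``the principal technical challenge.'' As written, the proposal is an outline of the right strategy, not a proof.

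Two minor remarks. The opening volume bound $\sum_i \vol(M'(x_i)) = O(\delta)$ is correct but, as you note, cannot by itself give better than $O(1/\delta^{d-1})$ in the worst case, so it plays no role in the final bound. Also, Lemma~\ref{lem:mac2} gives $C \subseteq M^{3d}(x)$, i.e.\ $\vol(C_i) \le (3d)^d \vol(M(x_i)) = (15d)^d \vol(M'(x_i))$, so that constant is fine, but the inequality direction you want for the packing is the one relating $\vol(M'(x_i))$ to $\vol(C_i)$ from below, which Lemma~\ref{lem:mac1} supplies; be careful to use the right direction in the final write-up.
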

%-----------------------------------------------------------------------

Our proof of Lemma~\ref{lem:bound-mac} will require the following geometric observation, which is a straightforward extension of Dudley's convex approximation construction (see Lemma~{4.4} of~\cite{Dud74}). It is similar to other results based on Dudley's construction (including Lemma~{3.6} of~\cite{AHV04} and Lemma~{23.12} of~\cite{SarielBook}). We will present the proof for the sake of completeness. Let $S$ denote the sphere of radius 2 centered at the origin $O$, which we call the \emph{Dudley sphere}. Given vectors $u$ and $v$, let $\ang{u,v}$ denote their dot product and let $\|u\| = \ang{u,u}^{1/2}$ denote $u$'s Euclidean length.

%-----------------------------------------------------------------------
\begin{lemma} \label{lem:dudley}
Let $K$ be a convex body that lies within a unit sphere centered at the origin, and let $0 < \delta \le 1$. Let $x'$ and $y'$ be two points of $S$. Let $x$ and $y$ be the points of $\partial K$ that are closest to $x'$ and $y'$, respectively. Let $h$ denote the supporting hyperplane at $x$ orthogonal to the segment $x x'$. Let $C$ denote the cap cut from $K$ by a hyperplane parallel to and at distance $\delta$ from $h$. If $y \notin C$, then $\|x' - y'\| \ge \sqrt{\delta}$.
\end{lemma}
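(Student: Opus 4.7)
The plan is to exploit the variational characterization of metric projection onto a convex set. Since $K \subseteq B^1(O)$ while $\|x'\| = \|y'\| = 2$, both $x'$ and $y'$ lie strictly outside $K$, so $x$ and $y$ are precisely the metric projections of $x'$ and $y'$ onto the convex body $K$; hence for every $z \in K$,
\[
  \langle x' - x,\, z - x \rangle \le 0 \quad\text{and}\quad \langle y' - y,\, z - y \rangle \le 0.
\]
Introduce unit vectors $u = (x'-x)/\|x'-x\|$ and $v = (y'-y)/\|y'-y\|$ and scalars $r = \|x'-x\|$, $r_y = \|y'-y\|$, so that $x' = x + ru$ and $y' = y + r_y v$. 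By construction $u$ is the outward unit normal of the supporting hyperplane $h$.

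For the first step I would feed specific choices of $z$ into the two variational inequalities. Setting $z = y$ in the first gives $\langle u,\, y - x\rangle \le 0$, and the hypothesis $y \notin C$ sharpens this to
\[
  \langle u,\, x - y\rangle \ge \delta,
\]
since $C$ is exactly the slab of $K$ lying within distance $\delta$ of $h$ along direction $-u$. Setting $z = x$ in the second inequality gives $\langle v,\, x - y\rangle \le 0$.

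The second step is a direct expansion:
\[
  \|x' - y'\|^2
  = \|(x-y) + ru - r_y v\|^2
  = \|x-y\|^2 + 2r\langle u,\, x-y\rangle - 2r_y\langle v,\, x-y\rangle + \|ru - r_y v\|^2.
\]
The first and fourth summands are non-negative, and each piece of the cross term is non-negative by the two inequalities just derived, with the first piece at least $2r\delta$. Hence $\|x'-y'\|^2 \ge 2r\delta$. A final triangle inequality gives $r = \|x'-x\| \ge \|x'\| - \|x\| \ge 2 - 1 = 1$, so $\|x'-y'\|^2 \ge 2\delta \ge \delta$, yielding the claimed bound (with a spare factor of $\sqrt{2}$, in fact).

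I do not anticipate any substantial obstacle. The one point that merits care is translating the hypothesis $y \notin C$ into the signed inequality $\langle u,\, x - y\rangle \ge \delta$: one must keep track of which side of $h$ contains $K$ and identify $u$ as the \emph{outward} unit normal of $h$. Everything else reduces to algebraic expansion of a squared norm and a single application of the triangle inequality, both of which are routine.
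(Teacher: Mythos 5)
Your proof is correct, and it takes a genuinely cleaner route than the paper's. Both proofs rest on the same two sign observations — the variational inequality $\langle x'-x,\, z-x\rangle \le 0$ for $z \in K$ (and symmetrically for $y'$), which is precisely what the paper packages as Dudley's technical Lemma~4.3 — but you then diverge. The paper argues by contradiction: it applies Lemma~4.3 only to conclude that both $\|x-y\|$ and $\|u-v\|$ are small, passes to normalized directions, and sets up an angle comparison (the angle $\phi$ in the right triangle $xzy$ versus the angle $\theta$ between the outward normals) to derive a contradiction via the identity $\|\widehat{u}-\widehat{v}\| = 2\sin(\theta/2)$. You instead expand $\|x'-y'\|^2$ directly and notice that the cross term is not merely nonnegative but bounded below by $2r\delta$, where the hypothesis $y\notin C$ supplies $\langle u,\, x-y\rangle \ge \delta$ and the geometry of the Dudley sphere supplies $r = \|x'-x\| \ge 1$. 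This avoids the angle analysis entirely, is shorter, and in fact delivers the slightly sharper bound $\|x'-y'\| \ge \sqrt{2\delta}$. The only care needed — correctly identifying $u$ as the \emph{outward} normal of $h$ and translating $y\notin C$ into the signed inequality — you have already flagged and handled. One cosmetic remark: $y\notin C$ actually gives the strict inequality $\langle u,\, x-y\rangle > \delta$, but the non-strict version you wrote is all the argument uses.
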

%-----------------------------------------------------------------------

%-----------------------------------------------------------------------
\begin{proof}
Before starting the proof, we recall a technical result (Lemma~{4.3}) from Dudley~\cite{Dud74}, which states that given vectors $x$, $y$, $u$, $v$ in $\RE^d$ such that $\ang{x - y, u} \ge 0$ and $\ang{x - y, v} \le 0$, $\|(x + u) - (y + v)\| \ge \max(\|x - y\|, \|u - v\|)$. This follows from the observation that 
\[
	\|(x + u) - (y + v)\|^2
		~  =  ~ \|x - y\|^2 + \|u - v\|^2 + 2 \ang{x - y, u - v} 
		~ \ge ~ \|x - y\|^2 + \|u - v\|^2.
\]

Returning to the proof, suppose towards a contradiction that $y \notin C$ but $\|x'-y'\| < \sqrt{\delta}$. Let $u = x'-x$ and $v = y'-y$, and let $\widehat{u} = u/\|u\|$ and $\widehat{v} = v/\|v\|$ (see Figure~\ref{f:dudley}). Clearly, $\|(x + u) - (y + v)\| = \|x' - y'\| < \sqrt{\delta}$. A direct consequence of convexity is that $\ang{x - y, u} \ge 0$ and $\ang{x - y, v} \le 0$, and so by the above result it follows that $\|x - y\|$ and $\|u - v\|$ are both less than $\sqrt{\delta}$. Clearly, $u$ and $v$ are of at least unit length, and thus $\|\widehat{u} - \widehat{v}\| \le \|u - v\| < \sqrt{\delta}$. Let $\theta$ denote the angle between $\widehat{u}$ and $\widehat{v}$. Since $\|x' - y'\| < \sqrt{\delta} \le 1$ and the radius of $S$ is 2, it follows that $\theta < \pi/2$.

%-----------------------------------------------------------------------
\begin{figure}[htbp]
	\centerline{\includegraphics[scale=.75]{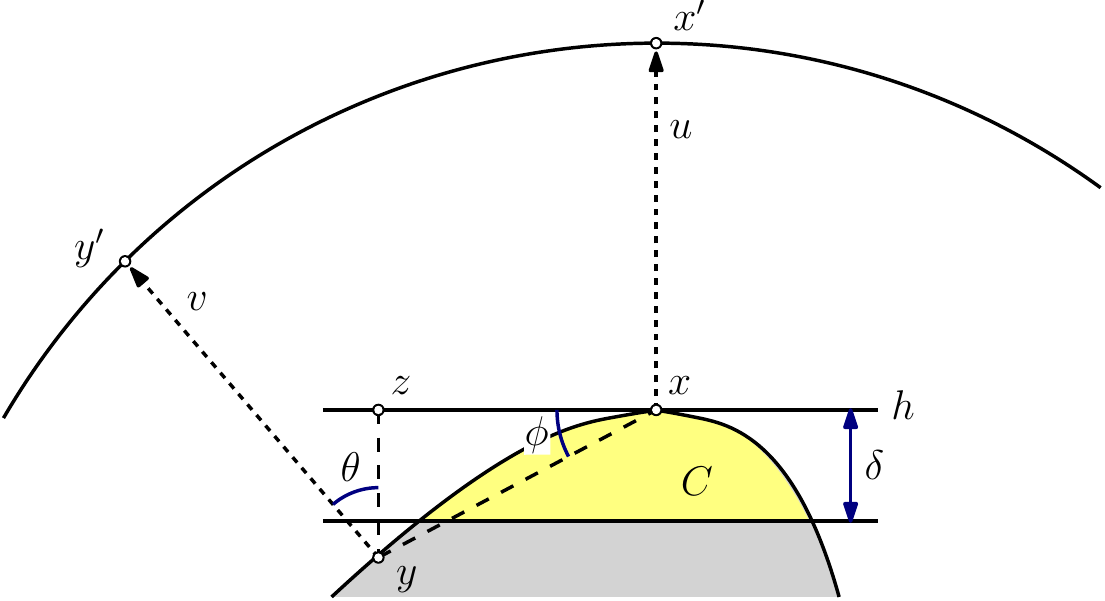}}
	\caption{\label{f:dudley}Illustrating Lemma~\ref{lem:dudley}.}
\end{figure}
%-----------------------------------------------------------------------

Consider the right triangle whose hypotenuse is $x y$ and whose third vertex is the orthogonal projection of $y$ onto the supporting hyperplane $h$, which we denote by $z$. Letting $\phi = \angle z x y$, it follows from convexity that $\phi \le \theta$. (This is because any supporting hyperplane through $y$ cannot pass below $x$.) Because $\theta < \pi/2$, $\sin \theta \ge \sin \phi$. Also, since $y \notin C$, we have $\|z - y\| > \delta$, and therefore 
\[
	\sin \theta 
		~ \ge ~ \sin \phi
		~  =  ~ \frac{\|z - y\|}{\|x - y\|} 
		~  >  ~ \frac{\delta}{\sqrt{\delta}}
		~  =  ~ \sqrt{\delta}. 
\]
Observe that $\|\widehat{u} - \widehat{v}\|$ is the length of a chord of a unit circle that subtends an arc of angle $\theta$, and therefore $\|\widehat{u} - \widehat{v}\| =  2 \sin \frac{\theta}{2}$. Given our earlier bound on this distance, we obtain the following contradiction:
\[
	\sqrt{\delta} 
		~  <  ~ \sin \theta 
		~  =  ~ 2 \sin \frac{\theta}{2} \cos \frac{\theta}{2}
		~ \le ~ 2 \sin \frac{\theta}{2}
                ~  =  ~ \|\widehat{u} - \widehat{v}\|
		~  <  ~ \sqrt{\delta}.
\]
\end{proof}
%-----------------------------------------------------------------------
 
We are now ready to present the proof of Lemma~\ref{lem:bound-mac}.

%-----------------------------------------------------------------------
\begin{proof}(of Lemma~\ref{lem:bound-mac})
Let $A$ be the set of disjoint Macbeath regions $M'(x)$ described in the lemma. For each region $M'(x)$, let $C(x)$ denote the cap whose base centroid point generates $M'(x)$. We begin by pruning $A$ to obtain a subset $B$, which to within constant factors has the same cardinality as $A$. We construct $B$ incrementally as follows. Initially $B$ is the empty set. In each step, from among the Macbeath regions that still remain in $A$, we choose a Macbeath region $M'(x)$ that has the smallest volume, and insert it into $B$. We then prune all the Macbeath regions from $A$ that intersect the cap $C^4(x)$. We continue in this manner until $A$ is exhausted.

We claim that in each step, we prune a constant number of Macbeath regions from $A$. Let $M'(x)$ denote the Macbeath region inserted into $B$ in this step. If $M'(y)$ is a Macbeath region that is pruned in this step, then $M'(y)$ intersects the cap $C^4(x)$. It then follows from Lemma~\ref{lem:cap-mac} that $M'(y) \subseteq C^8(x)$. Note that
\[
  \vol(C^8(x)) 
	~ \le ~ 8^d \vol(C(x)) 
	~  =  ~ O(\vol(C(x))). 
\]
Since $C(x)$ is of width at most $2 \delta \le \Delta_0$, we may apply Lemma~\ref{lem:mac2}, which yields $C(x) \subseteq M^{3d}(x)$. It follows that 
\[
  \vol(M(x)) 
	~ \ge ~ \vol(C(x)) / (3d)^d 
	~  =  ~ \Omega(\vol(C(x))). 
\]
Recall that each Macbeath region pruned has volume greater than or equal to the volume of $M'(x)$. It follows that the volume of each Macbeath region pruned is $\Omega(\vol(M(x))) = \Omega(\vol(C(x)))$. Since the pruned Macbeath regions are disjoint and contained in a region of volume $O(\vol(C(x)))$, a straightforward packing argument implies that the number of Macbeath regions pruned is $O(1)$.

The claim immediately implies that $|A| = O(|B|)$. In the remainder of the proof, we will show that $|B| = O(1/\delta^{(d-1)/2})$, which will complete the proof.

Let $X$ denote the set of centers of the Macbeath regions of $B$, that is, $X = \{x: M'(x) \in B\}$.  We map each point $x \in X$ to a point $x'$ on the Dudley sphere such that $xx'$ is normal to the base of the cap $C(x)$. We claim that the distance between any pair of the projected points $x'$ on the Dudley sphere is at least $\sqrt{\delta}$. Note that this claim would imply the desired bound on $|B|$ and complete the proof.

To see this claim, consider any two Macbeath regions $M'(x)$ and $M'(y)$ in the set $B$. Without loss of generality, suppose that $M'(y)$ is inserted into $B$ after $M'(x)$. By our construction, it follows that $y$ is not contained in $C^4(x)$ (because otherwise $M'(y)$ would intersect $C^4(x)$ and would have been pruned after inserting $M'(x)$ into $B$). We now consider two cases, depending on whether or not $x$ is contained in $C(y)$. 

\medskip\noindent\textbf{Case 1:}
($x \notin C(y)$)
Consider the convex body $K'$ that is the closure of $K \setminus (C(x) \cup C(y))$ (outlined in red in Figure~\ref{f:case12}(a)). Note that $x$ and $y$ are on the boundary of the convex body $K'$ and these are the points of $\partial K'$ that are closest to $x'$ and $y'$, respectively. Next, consider the cap of $K'$ whose apex is $x$ and width is $\delta$. Call this cap $C'(x)$. Since the width of $C(x)$ is at least $\delta$, and $y \notin C^4(x)$, it is easy to see that $y \notin C'(x)$. Applying Lemma~\ref{lem:dudley} to the convex body $K'$ and the points $x'$, $y'$, $x$, and $y$, it follows that $\|x'y'\| \geq \sqrt{\delta}$.

%-----------------------------------------------------------------------
\begin{figure}[htbp]
	\centerline{\includegraphics[scale=.75]{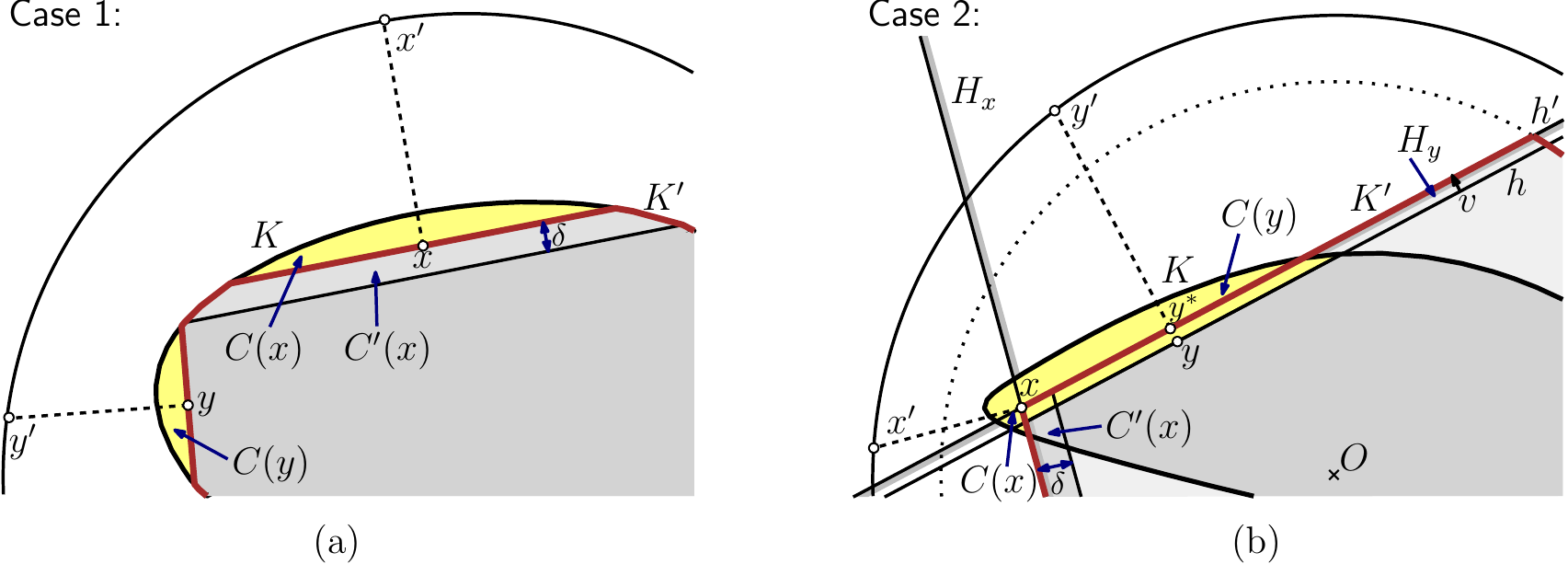}}
	\caption{\label{f:case12}Cases arising in the proof of Lemma~\ref{lem:bound-mac}. (Figure not to scale.)}
\end{figure}
%-----------------------------------------------------------------------

\medskip\noindent\textbf{Case 2:}
($x \in C(y)$)
Let $h$ denote the hyperplane that forms the base of $C(y)$ (see Figure~\ref{f:case12}(b)). Let $h'$ denote the hyperplane parallel to $h$ that passes through $x$. Let $v$ denote the vector normal to $h$, whose magnitude is the distance between $h$ and $h'$. Note that $h'$ = $h$ + $v$. Since $C(y)$ is a cap of width at most $2 \delta$, the magnitude of the translation vector $v$ is at most $2 \delta$. Let $y^* = y + v$. Let $H_y$ denote the halfspace bounded by $h'$ that contains the origin. Let $H_x$ denote the halfspace that contains the origin and whose boundary is the hyperplane forming the base of $C(x)$. Define the convex body $K'$ as the intersection of $H_x$ and $H_y$ and a ball of unit radius centered at the origin. Note that $x$ and $y^*$ lie on the boundary of $K'$ (since $\|Ox\| < 1$ and $\|Oy^*\| < 1$; $\|Ox\| < 1$ holds trivially since $x \in K$ and $K \subseteq B^{1/2}(O)$, and $\|Oy^*\| \le \|Oy\| + \|yy^*\| \le 1/2 + 2\delta \le 1/2 + 2\Delta_0 < 1$). 

Further, the points $x$ and $y^*$ are the points of $\partial K'$ that are closest to $x'$ and $y'$, respectively. Next, consider the cap of $K'$ whose apex is $x$ and width is $\delta$ and whose base is parallel to the base of $C(x)$. Call this cap $C'(x)$. Recall that $y \notin C^4(x)$, the width of $C(x)$ is at least $\delta$, and the distance between $y$ and $y^*$ is at most $2\delta$. It follows that $y^*$ is at distance bigger than $3\delta - 2\delta = \delta$ from the hyperplane passing through the base of $C(x)$. Since the distance between the hyperplanes passing through the bases of $C(x)$ and $C'(x)$, respectively, is $\delta$, it follows that $y^* \notin C'(x)$. Applying Lemma~\ref{lem:dudley} to the convex body $K'$ and the points $x', y', x$, and $y^*$, it follows that the distance between $x'$ and $y'$ is at least $\sqrt{\delta}$.
This establishes the above claim and completes the proof.
\end{proof}
%-----------------------------------------------------------------------

The remainder of this section is devoted to proving Lemma~\ref{lem:ecc}.

%-----------------------------------------------------------------------
\begin{proof}
Assume that $\eps \le \Delta_0$, where $\Delta_0$ is the constant of Lemma~\ref{lem:mac2}. Let $\beta = 30d$ be the constant of Lemma~\ref{lem:cap-cap}. Let $\mathcal{C}$ be a maximal set of caps, each of width $\eps/\beta$, such that the $(1/5)$-scaled Macbeath regions centered at the centroids of the bases of these caps are disjoint. Let $A_1,\ldots,A_k$ denote the caps of $\mathcal{C}$. Let $x_i$ denote the centroid of the base of cap $A_i$.  With each cap $A_i$, we associate a convex body $R_i = M'(x_i)$ and a cap $C_i = A_i^{\beta^2}$. We will show that the convex bodies $R_i$ and caps $C_i$ satisfy the properties given in the lemma. 

By Lemma~\ref{lem:bound-mac}, $|\mathcal{C}| = O(1/\eps^{(d-1)/2})$, which implies the desired upper bound on $k$. Since $C_i$ is a $\beta^2$-expansion of $A_i$, its width is $\beta \eps$. To prove Property~1, it remains to show that $M'(x_i) \subseteq C_i \subseteq (M'(x_i))^{\lambda}$. By Lemma~\ref{lem:mac1}, $M'(x_i) \subseteq A_i^{6/5}$. Since $A_i^{6/5}  \subseteq A_i^{\beta^2} = C_i$, we obtain $M'(x_i) \subseteq C_i$. Also, applying Lemma~\ref{lem:mac3}, we obtain 
\[
  C_i 
	~     =     ~ A_i^{\beta^2} 
	~ \subseteq ~ M^{3d(2\beta^2-1)}(x_i) 
    ~     =     ~ (M'(x_i))^{15d(2\beta^2-1)} 
	~ \subseteq ~ (M'(x_i))^{\lambda},
\]
where $\lambda = 30 d \beta^2$. Thus, $M'(x_i) \subseteq C_i \subseteq (M'(x_i))^\lambda$.

To show Property~2, let $C$ be any cap of width $\eps$. Let $x$ denote the centroid of the base of $C^{1/\beta}$. By maximality of $\CC$, there must be a Macbeath region $M'(x_i)$ that has a nonempty intersection with $M'(x)$ (note $x_i$ may be the same as point $x$). Applying Lemma~\ref{lem:mac-mac}, it follows that $M'(x_i) \subseteq M(x)$. By Lemma~\ref{lem:mac1}, $M(x) \subseteq C^{2/\beta}$. Putting it together, we obtain $M'(x_i) \subseteq M(x) \subseteq C^{2/\beta} \subseteq C$, which establishes the first part of Property~2.

It remains to show that $C_i^{1/\beta^2} \subseteq C \subseteq C_i$. Since $M'(x_i) \cap M'(x) \neq \emptyset$, we can apply Lemma~\ref{lem:cap-cap} to caps $A_i$ and $C^{1/\beta}$ (for $\lambda = 1$) to obtain $A_i \subseteq (C^{1/\beta})^{\beta}$. Applying Lemma~\ref{lem:cap-cap} again to caps $C^{1/\beta}$ and $A_i$ (for $\lambda = \beta$), we obtain $(C^{1/\beta})^{\beta} \subseteq A_i^{\beta^2}$. Thus $A_i \subseteq C \subseteq A_i^{\beta^2}$. Recalling that $C_i = A_i^{\beta^2}$, we obtain $C_i^{1/\beta^2} \subseteq C \subseteq C_i$, as desired.
\end{proof}
%-----------------------------------------------------------------------

%=======================================================================
\section{Polytope Approximation} \label{s:approx}
%=======================================================================
In this section, we will show how to obtain an $\eps$-approximating convex polytope $P$ of low combinatorial complexity. Let $K$ be a convex body in canonical form. Our strategy is as follows. First, we build a set $\RR$ of disjoint centrally symmetric convex bodies lying within $K$ and close to its boundary. These bodies will possess certain key properties to be specified later. For each $R \in \RR$, we select a point arbitrarily from this body, and let $S$ denote this set of points. The approximation $P$ is defined as the convex hull of $S$. In Lemma~\ref{lem:apx}, we will prove that $P$ is an $\eps$-approximation of $K$ and, in Lemma~\ref{lem:fewfaces}, we will apply a deterministic variant of the witness-collector approach~\cite{DGG13} to show that $P$ has low combinatorial complexity.

Before delving into the details, we provide a high-level overview of the witness-collector method, adapted to our context.
Let $\HH$ denote the set of all halfspaces in $\RE^d$. We define a set $\WW$ of regions called \emph{witnesses} and a set $\CC$ of regions called \emph{collectors}, which satisfy the following properties:
\begin{enumerate}
\item[(1)] Each witness of $\WW$ contains a point of $S$ in its interior.
\item[(2)] Any halfspace $H \in \HH$ either contains a witness $W \in \WW$ or $H \cap S$ is contained in a collector $C \in \CC$.
\item[(3)] Each collector $C \in \CC$ contains a constant number of points of $S$.
\end{enumerate}

The key idea of the witness-collector method is encapsulated in the following lemma.

\begin{lemma} \label{lem:witness-collector}
Given a set of witnesses and collectors satisfying the above properties, the combinatorial complexity of the convex hull $P$ of $S$ is $O(|\CC|)$.
\end{lemma}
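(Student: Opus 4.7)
The plan is to charge each face of $P$ to a collector in $\CC$ in such a way that at most $O(1)$ faces land on any single collector; summing then yields the $O(|\CC|)$ bound. For any face $f$ of $P$ of dimension $0 \le k \le d-1$, I would fix a supporting hyperplane $h_f$ with $P \cap h_f = f$, and let $H_f$ denote the closed halfspace bounded by $h_f$ on the side disjoint from the interior of $P$. Since $S \subseteq P$, it follows immediately that $H_f \cap S = f \cap S = \text{vert}(f)$, the vertex set of $f$.

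Next I would invoke Property~(2) on the halfspace $H_f$: either some witness $W \in \WW$ satisfies $W \subseteq H_f$, or else $H_f \cap S \subseteq C_f$ for some collector $C_f \in \CC$. The main step is to rule out the first alternative. Suppose $W \subseteq H_f$. By Property~(1), some $p \in S$ lies in the interior of $W$, hence $p \in H_f \cap S = \text{vert}(f) \subseteq h_f$. However, being interior to $W \subseteq H_f$ forces some open ball around $p$ to sit inside the closed halfspace $H_f$, which is impossible for a point on its bounding hyperplane. This contradiction forces the second alternative, so $\text{vert}(f) \subseteq C_f$, and I would charge $f$ to $C_f$.

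The counting is then immediate: by Property~(3) each collector $C \in \CC$ meets $S$ in $O(1)$ points, so at most $2^{O(1)} = O(1)$ distinct vertex sets, and hence $O(1)$ faces of $P$, are charged to any single $C$. Summing over all $|\CC|$ collectors yields the desired bound. I expect the witness-exclusion step to be the only subtle part of the argument; it relies crucially on Property~(1) specifying an \emph{interior} point of $S$, since otherwise a witness touching $h_f$ only tangentially would not produce the needed contradiction.
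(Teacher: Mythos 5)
Your proof is correct and follows essentially the same structure as the paper's: charge each face $f$ to a collector that contains $H_f \cap S$, rule out the witness alternative using Property~(1), and then bound the number of charges per collector using $|C \cap S| = O(1)$. One small imprecision: $f \cap S$ need not equal $\text{vert}(f)$, since a point of $S$ may lie in the relative interior of $f$ without being a vertex of $P$; but this is harmless because $\text{vert}(f) \subseteq f \cap S \subseteq C_f$ and the map $f \mapsto f \cap S$ (or $f \mapsto \text{vert}(f)$) is still injective, so the $2^{O(1)}$ count stands. The only stylistic difference from the paper is the choice of charging set: the paper maps $f$ to a maximal affinely independent subset $S_f$ of $f \cap S$ (so $|S_f| \le d$ a priori, giving a count of $O(|C \cap S|^d)$ per collector), whereas you charge to the entire set $f \cap S$ and lean directly on Property~(3) to get $2^{O(1)} = O(1)$; both are fine here.
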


\begin{proof}
We map each face $f$ of $P$ to any maximal subset $S_f \subseteq S$ of affinely independent points on $f$. Note that this is a one-to-one mapping and $|S_f| \le d$. In order to bound the combinatorial complexity of $P$ it suffices to bound the number of such subsets $S_f$. 

For a given face $f$, let $H$ be any halfspace such that $H \cap P = f$. Clearly $H$ does not contain any witness since otherwise, by Property~1, it would contain a point of $S$ in its interior. By Property~2, $H \cap S$ is contained in some collector $C \in \CC$. Thus $S_f \subseteq C$. Since $|S_f| \le d$, it follows that the number of such subsets $S_f$ that are contained in any collector $C$ is at most
\[
  \sum_{1 \le j \le d} {\binom{|C \cap S|}{j}} 
	~ = ~ O(|C \cap S|^d) 
	~ = ~ O(1),
\]
where in the last step we have used the fact that $|C \cap S|  = O(1)$ (Property~3). Summing over all the collectors, it follows that the total number of sets $S_f$, and hence the combinatorial complexity of $P$, is $O(|\CC|)$.
\end{proof}

A natural choice for the witnesses and collectors would be the convex bodies $R_i$ and the caps $C_i$, respectively, from Lemma~\ref{lem:ecc}. Unfortunately, these bodies do not work for our purposes. The main difficulty is that Property~3 could fail, since a cap $C_i$ could intersect a non-constant number of bodies of $\RR$, and hence contain a non-constant number of points of $S$. 
(To see this, suppose that $K$ is a cylinder in $3$-dimensional space. A cap of width $\Theta(\eps)$ that is parallel to the circular flat face of $K$ intersects $\Omega(1/\sqrt{\eps})$ bodies, which will be distributed around the circular boundary of this face.) In this section, we show that it is possible to construct a set of witnesses and collectors that satisfy all the requirements by scaling and translating the convex bodies from Lemma~\ref{lem:ecc} into a stratified placement according to their volumes. The properties we obtain are specified below in Lemma~\ref{lem:layers}.

We begin with some easy geometric facts about a convex body $K$ in canonical form.  For any point $x \in K$, define $\delta(x)$ to be the minimum distance from $x$ to any point on $\partial K$. Further, define the \emph{ray-distance} of a point $x$ to the boundary as follows. Consider the ray emanating from $O$ and passing through $x$. Let $p$ denote the intersection of this ray with $\partial K$. We define $\ray(x) = \|xp\|$. Clearly $\ray(x) \ge \delta(x)$. Lemma~\ref{lem:raydist-delta} shows that these two quantities are the same to within a constant factor.

%-----------------------------------------------------------------------
\begin{lemma} \label{lem:raydist-delta}
Let $K \subset \RE^d$ be a convex body in canonical form. For any point $x \in K$, $\ray(x) \le d \cdot \delta(x)$.
\end{lemma}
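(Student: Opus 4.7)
The plan is to prove this by a straightforward convexity/similar-triangles argument, leveraging the two balls that sandwich $K$ in canonical form. The intuition: since $K$ contains $B^{1/2d}(O)$ and also contains the boundary point $p$, the convex hull of $B^{1/2d}(O)$ and $\{p\}$ is contained in $K$, and this cone-like region already forces a sizable inscribed ball around every point of the segment $Op$.

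First I would set up notation. Let $p$ be the intersection of the ray from $O$ through $x$ with $\partial K$, and let $b = \|Op\|$, so that $\ray(x) = b - \|Ox\|$. Write $x = t\, p$ for the unique $t \in [0,1]$ with $\|Ox\| = tb$; then $\ray(x) = (1-t)\, b$. Because $K$ is in canonical form, we have the two crucial facts $B^{1/2d}(O) \subseteq K$ and $b \le 1/2$ (since $p \in K \subseteq B^{1/2}(O)$).

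The main step is the lower bound $\delta(x) \ge (1-t)/(2d)$. To see this, let $u$ be any unit vector and consider the point
\[
  x + \frac{1-t}{2d}\, u ~=~ (1-t)\left(O + \frac{u}{2d}\right) + t\, p.
\]
The first summand lies in $B^{1/2d}(O) \subseteq K$ and the second in $K$, so by convexity the whole expression lies in $K$. Hence $B^{(1-t)/(2d)}(x) \subseteq K$, which gives $\delta(x) \ge (1-t)/(2d)$.

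Combining this with $\ray(x) = (1-t)\, b \le (1-t)/2$ yields
\[
  \frac{\ray(x)}{\delta(x)} ~\le~ \frac{(1-t)/2}{(1-t)/(2d)} ~=~ d,
\]
which is the desired inequality. There is no real obstacle here; the only mild subtlety is recognizing that the canonical form gives both the inscribed ball (needed to lower bound $\delta(x)$) and the unit-diameter bound $b \le 1/2$ (needed to upper bound $\ray(x)$), and that these combine to give exactly the constant $d$.
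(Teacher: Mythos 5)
Your argument is correct, and it rests on the same geometric fact the paper uses---that $\conv(B^{1/2d}(O) \cup \{p\}) \subseteq K$---but the execution is genuinely different. The paper builds the cone $K'$ explicitly, slices it with a 2-flat through $Op$, and extracts the bound $\|xp\|/\|xt\| \le d$ from a similar-triangles computation involving the tangent line from $p$ to the inscribed circle. You instead parametrize $x = tp$ and observe directly that for any unit vector $u$,
\[
  x + \frac{1-t}{2d}\, u \;=\; (1-t)\,\frac{u}{2d} + t\,p
\]
is a convex combination of a point of $B^{1/2d}(O)$ and $p$, hence lies in $K$; this gives $\delta(x) \ge (1-t)/(2d)$ with no slicing or triangle geometry, and combining with $\ray(x) = (1-t)b \le (1-t)/2$ yields the same constant $d$. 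Your route is a bit more streamlined and coordinate-free; the paper's route makes the extremal configuration (the tangent line) visible, which can be helpful if one wants to see that the constant $d$ is tight for this cone construction. One cosmetic point: the final division assumes $t < 1$, but when $t = 1$ we have $x = p \in \partial K$ and both $\ray(x)$ and $\delta(x)$ vanish, so the inequality holds trivially.
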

%-----------------------------------------------------------------------

%-----------------------------------------------------------------------
\begin{proof}
Let $p$ denote the intersection with $\partial K$ of the ray emanating from $O$ and passing through $x$ (see Figure~\ref{f:raydist-delta}(a)). Let $K'$ denote the convex hull of the point $p$ and the ball $B^{1/2d}(O)$. By convexity, $K'$ contains the segment $Op$ and $K' \subseteq K$. It follows that the distance between $x$ and $\partial K'$ is a lower bound on $\delta(x)$. 

%-----------------------------------------------------------------------
\begin{figure}[htbp]
  \centerline{\includegraphics[scale=.75]{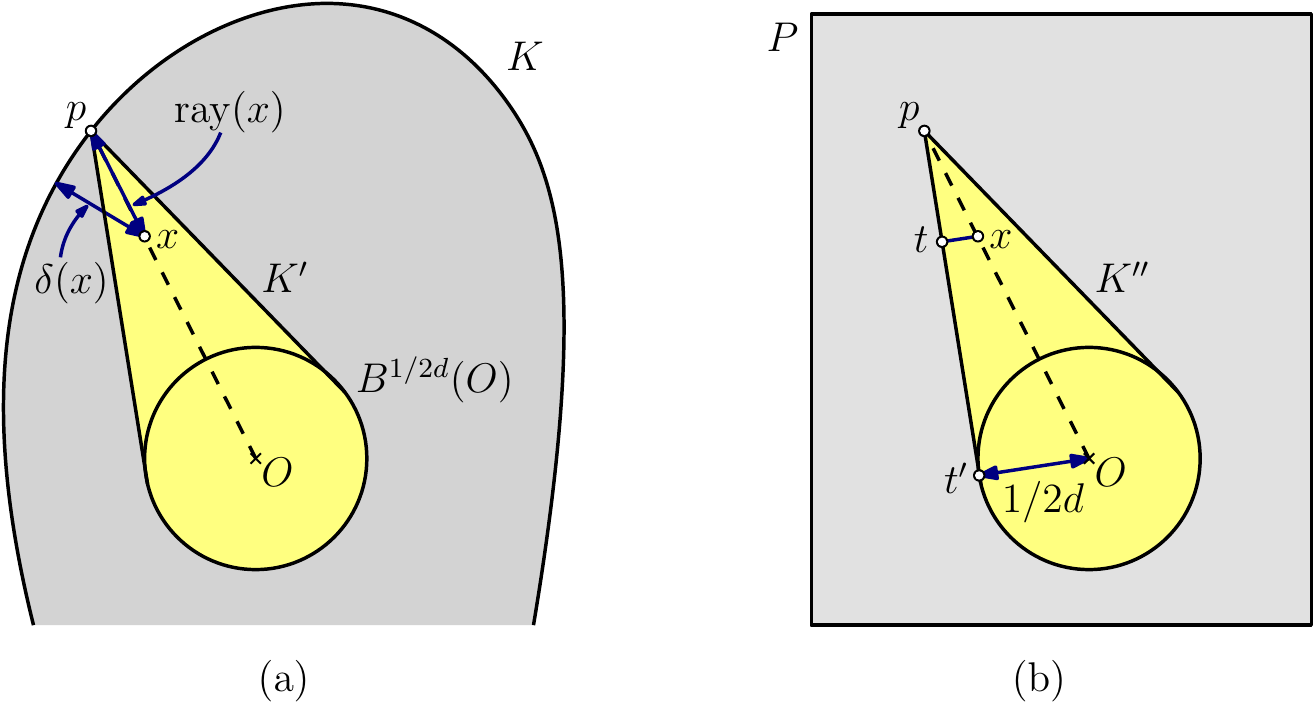}}
  \caption{\label{f:raydist-delta}Illustrating Lemma~\ref{lem:raydist-delta}.}
\end{figure}
%-----------------------------------------------------------------------

To compute the distance between $x$ and $\partial K'$, consider any 2-flat $P$ containing the line $Op$ and let $K'' = K' \cap P$ (see Figure~\ref{f:raydist-delta}(b)). By symmetry, the distance between $x$ and $\partial K'$ is the same as the distance between $x$ and $\partial K''$. Note that $\partial K''$ consists of a portion of a circle of radius $1/(2d)$ centered at $O$, and the two tangents to this circle from point $p$. It is straightforward to see that the points of $\partial K''$ that are closest to $x$ lie on the two tangent lines (one on each tangent). Let $t'$ denote the point where one of these tangents touches the circle, and let $t$ denote the point on segment $pt'$ that is closest to $x$. Since triangles $\triangle Ot'p$ and $\triangle xtp$ are similar, we have $\|xp\| / \|xt\| = \|Op\| / \|Ot'\|$. Since $\|Op\| \le 1/2$ and $\|Ot'\| \ge 1/(2d)$, we have $\|xp\| / \|xt\| \le d$. That is, $\ray(x) = \|xp\| \le d \|xt\| \le d \cdot \delta(x)$, as desired.
\end{proof}
%-----------------------------------------------------------------------

The following technical lemma gives upper and lower bounds on the volume of a cap of width $\alpha$.

%-----------------------------------------------------------------------
\begin{lemma} \label{lem:width-vol}
Let $K \subset \RE^d$ be a convex body in canonical form and let $\alpha < 1$ be a positive real. Then the volume of any cap $C$ of width $\alpha$ is $O(\alpha)$ and $\Omega(\alpha^d)$.
\end{lemma}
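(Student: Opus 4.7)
The plan is to handle the two bounds separately, relying only on the canonical-form inclusions $B^{1/2d}(O) \subseteq K \subseteq B^{1/2}(O)$.

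\textbf{Upper bound $O(\alpha)$.} The cap $C$ lies in the slab of width $\alpha$ bounded by $h$ and the parallel supporting hyperplane at the apex of $C$. Since $C \subseteq K \subseteq B^{1/2}(O)$, slicing this slab by hyperplanes parallel to $h$ yields $(d-1)$-dimensional cross-sections each contained in a $(d-1)$-ball of radius at most $1/2$, hence of area $O(1)$. Integrating over the slab of thickness $\alpha$ gives $\vol(C) = O(\alpha)$.

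\textbf{Lower bound $\Omega(\alpha^d)$.} My strategy is to exhibit a Euclidean ball of radius $\Omega(\alpha)$ inside $C$; the bound on $\vol(C)$ follows immediately. Let $a$ denote the apex of $C$ and let $n$ be the inward unit normal to $K$ at $a$, so that (translating $a$ to the origin) the cap becomes $\{x \in K : \langle x, n \rangle \in [0,\alpha]\}$. Because $B^{1/2d}(O) \subseteq K$, convexity forces $T := \conv(\{a\} \cup B^{1/2d}(O)) \subseteq K$, and for any $t \in [0,1]$ the slice $(1-t)a + t\,B^{1/2d}(O) \subseteq T$ is a genuine Euclidean ball of radius $t/(2d)$ centered at $a + t(O - a)$. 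Setting $\ell_n := \langle O - a, n \rangle$, the $n$-coordinates of this ball span the interval $[t(\ell_n - \tfrac{1}{2d}),\, t(\ell_n + \tfrac{1}{2d})]$. The two canonical-form inclusions yield $\tfrac{1}{2d} \le \ell_n \le \tfrac{1}{2}$: the lower bound because $B^{1/2d}(O) \subseteq K$ must lie on the inward side of the supporting hyperplane at $a$, and the upper bound because $\|O-a\| \le 1/2$. Choosing $t = \alpha/(\ell_n + 1/(2d))$ places the entire ball in the slab $[0,\alpha]$, hence in $C$; its radius is $t/(2d) = \alpha/(2d\ell_n + 1) \ge \alpha/(d+1) = \Omega(\alpha)$.

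The only real subtlety is the coordinated role of the two one-sided bounds on $\ell_n$: the lower bound $1/(2d)$ is exactly what keeps the constructed ball from poking through the supporting hyperplane at $a$, while the upper bound $1/2$ is what keeps the chosen $t$, and hence the radius, linear in $\alpha$. I will also dispose of the edge case in which the above choice of $t$ exceeds $1$, which happens iff $\alpha \ge \ell_n + 1/(2d)$; in that regime $B^{1/2d}(O) \subseteq C$ directly and $\vol(C) = \Omega(1) = \Omega(\alpha^d)$ since $\alpha < 1$. Everything else reduces to routine integration and convexity.
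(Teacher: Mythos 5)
Your upper-bound argument is the same as the paper's: intersect the width-$\alpha$ slab with the enclosing ball $B^{1/2}(O)$ and integrate. Your lower-bound argument, however, is a genuinely different (and self-contained) route. The paper takes the point $y$ where the ray from the center $O$ through the apex meets the base of $C$, invokes its separate Lemma~\ref{lem:raydist-delta} to conclude $\delta(y) = \Omega(\alpha)$, and then observes that half of the ball $B^{\delta(y)}(y)$ lies in $C$. You instead build the cone $T=\conv(\{a\}\cup B^{1/2d}(O))$ from the apex to the inscribed ball, slice it at parameter $t$, and tune $t$ so that the resulting ball of radius $t/(2d)$ fits inside the slab $[0,\alpha]$; the two bounds $1/(2d)\le \ell_n\le 1/2$ (both of which you justify correctly from canonical form) then give a full ball of radius $\ge\alpha/(d+1)$ inside $C$, and the edge case $t>1$ is correctly disposed of by noting $B^{1/2d}(O)\subseteq C$ there. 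Both arguments are sound; yours buys you independence from Lemma~\ref{lem:raydist-delta} (which, in the paper, exists essentially to serve this proof), at the cost of a slightly longer in-line computation. The paper's version buys the reusable geometric fact $\ray(x)\le d\cdot\delta(x)$ but proves it separately.
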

%-----------------------------------------------------------------------

%-----------------------------------------------------------------------
\begin{proof}
Let $h_1$ be the hyperplane passing through the base of $C$ and let $h_2$ be the parallel hyperplane passing through the apex $x$ of $C$. Since $C$ is contained in the intersection of ball $B^{1/2}(O)$ with the slab bounded by $h_1$ and $h_2$, it follows that $\vol(C) = O(\alpha)$.

To prove the lower bound, let $y$ denote the point where the ray $Ox$ intersects the base of the cap. We have $\ray(y)  = \|xy\|  \ge \alpha$. By Lemma~\ref{lem:raydist-delta}, we have $\delta(y) \ge \ray(y) / d$. It follows that $\delta(y) \ge \alpha / d$. Note that the ball of radius $\delta(y)$ centered at $y$ is contained within $K$ and half this ball lies within the cap $C$. Therefore, $\vol(C) = \Omega(\alpha^d)$.
\end{proof}
%-----------------------------------------------------------------------

The following lemma states that containment of caps is preserved if the halfspaces defining both caps are consistently scaled about a point that is common to both caps.

%-----------------------------------------------------------------------
\begin{lemma} \label{lem:sandwich}
Let $K$ be a convex body and let $\lambda \ge 1$. Let $C_1$ and $C_2$ be two caps of $K$ such that $C_1 \subseteq C_2$.  Let $H_1$ and $H_2$ be the defining halfspaces of $C_1$ and $C_2$, respectively. Let $H'_1$ and $H'_2$ be the halfspaces obtained by scaling $H_1$ and $H_2$, respectively, by a factor of $\lambda$ about $p$, where $p$ is any point in $K \cap C_1$. Let $C'_1$ and $C'_2$ be the caps $K \cap H'_1$ and $K \cap H'_2$, respectively. Then $C'_1 \subseteq C'_2$.
\end{lemma}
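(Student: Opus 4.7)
The plan is to reduce everything to a pointwise check: take an arbitrary point $q \in C'_1$ and show that $q \in C'_2$. To do this, I would work with the explicit form of the scaling map $\sigma_\lambda\colon x \mapsto p + \lambda (x - p)$, under which $H'_i = \sigma_\lambda(H_i)$ for $i = 1, 2$. Its inverse $\sigma_\lambda^{-1}$ sends $y$ to $q' := p + (y-p)/\lambda$, and the key membership equivalence is $y \in H'_i \Longleftrightarrow \sigma_\lambda^{-1}(y) \in H_i$.

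The steps, in order, are as follows. First, fix $q \in C'_1$, so $q \in K$ and $q \in H'_1$, and let $q' = p + (q-p)/\lambda$. Second, observe that because $\lambda \geq 1$, we have $q' = (1 - 1/\lambda)p + (1/\lambda)q$, which is a convex combination of $p$ and $q$; since both $p, q \in K$ and $K$ is convex, this puts $q'$ in $K$. Third, because $q \in H'_1$, membership under the scaling gives $q' \in H_1$, and combined with the previous step, $q' \in K \cap H_1 = C_1$. Fourth, invoke the hypothesis $C_1 \subseteq C_2$ to conclude $q' \in C_2 \subseteq H_2$. Finally, apply the scaling map $\sigma_\lambda$ in the forward direction: $\sigma_\lambda(q') = q$, and since $q' \in H_2$, we obtain $q \in H'_2$. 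Together with $q \in K$, this yields $q \in K \cap H'_2 = C'_2$, as required.

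The main technical point to get right — and essentially the only non-trivial observation in the argument — is the middle step: showing that the preimage $q'$ still lies inside $K$. This is precisely where both hypotheses ($p \in K \cap C_1$ and $\lambda \geq 1$) are used, since it is exactly the condition $\lambda \geq 1$ that forces $q'$ onto the segment $pq$ rather than outside it. Without this, membership of $q'$ in $C_1$ could fail and the chain $C_1 \subseteq C_2$ could not be applied. Once this convexity argument is in place, the rest of the proof is simply a bookkeeping exercise in chasing membership through the affine map $\sigma_\lambda$.
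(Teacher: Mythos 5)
Your proof is correct and uses essentially the same argument as the paper, just phrased pointwise rather than set-theoretically: your observation that $q' = \sigma_\lambda^{-1}(q)$ lies on the segment $pq$ and hence in $K$ is exactly the containment $K \subseteq \sigma_\lambda(K)$ that the paper invokes. The rest is the same membership chase through the affine map.
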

%-----------------------------------------------------------------------

%-----------------------------------------------------------------------
\begin{proof}
Given $\lambda$ and $p$, consider the affine transformation $f(q) = \lambda(q - p) + p$, which scales space by a factor of $\lambda$ about $p$. Thus, $H'_1 = f(H_1)$ and $H'_2 = f(H_2)$, Since $p \in K$ and $\lambda \ge 1$, it follows directly from convexity that $K \subseteq f(K)$. Given any halfspace $H$ such that $p \in K \cap H$, it follows that $K \cap f(H) = K \cap f(K \cap H)$. Since, $C_1 \subseteq C_2$, we have $f(K \cap H_1) \subseteq f(K \cap H_2)$, and thus,
\[
	C'_1 
		~ = ~ K \cap f(H_1) 
		~ = ~ K \cap f(K \cap H_1) 
		~ \subseteq ~ K \cap f(K \cap H_2) 
		~ = ~ K \cap f(H_2) 
		~ = ~ C'_2, 
\]
as desired.
\end{proof}
%-----------------------------------------------------------------------

Our choice of witnesses and collectors will be based on the following lemma. Specifically, the convex bodies $R_1, \ldots, R_k$, will play the role of the witnesses and the regions $C_1, \ldots, C_k$, will play the role of the collectors. The lemma strengthens Lemma~\ref{lem:ecc}, achieving the critical property that any collector $C_i$ intersects only a constant number of convex bodies of $\RR$. As each witness set $R_i$ will contain one point, this ensures that a collector contains only a constant number of input points (Property~3 of the witness-collector system). This strengthening is achieved at the expense of only an extra polylogarithmic factor in the number of collectors needed, compared with Lemma~\ref{lem:ecc}. Also, the collectors are no longer simple caps, but have a more complex shape as described in the proof (this, however, has no adverse effect in our application). 

%-----------------------------------------------------------------------
\begin{lemma} \label{lem:layers}
Let $\eps > 0$ be a sufficiently small parameter, and $\widehat{\eps} = \eps / \log(1/\eps)$. Let $K \subset \RE^d$ be a convex body in canonical form.  There exists a collection $\RR$ of $k = O(1/\widehat{\eps}^{\kern+1pt (d-1)/2})$ disjoint centrally symmetric convex bodies $R_1, \ldots, R_k$ and associated regions $C_1,\ldots, C_k$ such that the following hold:
\begin{enumerate}
\item Let $C$ be any cap of width $\eps$. Then there is an $i$ such that $R_i \subseteq C$.

\item Let $C$ be any cap. Then there is an $i$ such that either (i) $R_i \subseteq C$ or (ii) $C \subseteq C_i$.

\item For each $i$, the region $C_i$ intersects at most a constant number of bodies of $\RR$.
\end{enumerate}
\end{lemma}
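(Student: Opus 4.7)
The plan is to build the collection $\RR$ and the collectors by a stratified application of Lemma~\ref{lem:ecc} across $\Theta(\log(1/\eps))$ dyadic scales of cap width, so that each scale contributes bodies and collectors tuned to a distinct depth-band of $K$. Set $\eps_j = 2^j \widehat{\eps}$ for $j = 0, \ldots, L$ with $L$ chosen so that $\eps_L \le \Delta_0$, the constant of Lemma~\ref{lem:mac2}; thus $L = \Theta(\log(1/\widehat{\eps})) = \Theta(\log(1/\eps))$. At each layer $j$, I would invoke Lemma~\ref{lem:ecc} with parameter $\eps_j$ to obtain $O(1/\eps_j^{(d-1)/2})$ disjoint $(1/5)$-Macbeath regions $R_i^{(j)}$ together with their sandwiching caps $A_i^{(j)}$ of width $\Theta(\eps_j)$. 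To enforce disjointness across layers, I would retain at layer $j$ only those witnesses whose centers lie in the dyadic depth shell $[\eps_{j-1}, \eps_j]$ from $\partial K$; since Lemma~\ref{lem:ecc}'s centers naturally sit at depth $\Theta(\eps_j)$, this filter is only a constant-factor concession at each scale. Taking $\RR = \bigcup_{j=0}^L \RR_j$, the total count is a geometric series dominated by its $j = 0$ term:
\[
  |\RR| ~ = ~ \sum_{j=0}^{L} O\!\left(\frac{1}{\eps_j^{(d-1)/2}}\right) ~ = ~ O\!\left(\frac{1}{\widehat{\eps}^{\kern+1pt (d-1)/2}}\right).
\]

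For the collector $C_i^{(j)}$ I would \emph{not} use the raw cap $A_i^{(j)}$---that is the naive choice shown to fail by the cylinder example preceding the lemma. Instead, I would restrict $A_i^{(j)}$ to a thickened depth band by intersecting it with $K^{(j-1)} \setminus K^{(j+1)}$, where $K^{(m)}$ denotes the inner parallel body of $K$ at depth $\Theta(\eps_m)$ (with $K^{(-1)} := K$ and $K^{(L+1)} := \emptyset$). Thus $C_i^{(j)}$ is a cap of $K$ truncated on both sides so as to lie in the depth band $[\eps_{j-1}, \eps_{j+1}]$.

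To verify the three properties, Property~1 is immediate from Lemma~\ref{lem:ecc} at layer $0$: any $\eps$-wide cap contains an $\widehat{\eps}$-wide sub-cap and hence contains some $R_i^{(0)}$. For Property~2, I split on the width $w$ of the given cap $C$. If $w \ge \widehat{\eps}$, choose the layer $j$ with $\eps_j \le w < \eps_{j+1}$ and apply Property~2 of Lemma~\ref{lem:ecc} at layer $j$ to extract some $R_i^{(j)} \subseteq C$, giving case~(i). If $w < \widehat{\eps}$, extend $C$ outward to a width-$\widehat{\eps}$ cap $C^+ \supseteq C$ with the same outward normal; by Lemma~\ref{lem:ecc} at layer $0$, $C^+ \subseteq A_i^{(0)}$ for some $i$, and since $C$ lies in the outermost depth shell, $C \subseteq C_i^{(0)}$, giving case~(ii). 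Property~3 is the substantive part: the depth-band truncation confines $C_i^{(j)}$ to the shell $[\eps_{j-1},\eps_{j+1}]$, so it can only meet witnesses in layers $j-1$, $j$, $j+1$. Within the same layer, the containment $C_i^{(j)} \subseteq A_i^{(j)}$ lets a direct application of the packing argument behind Lemma~\ref{lem:bound-mac} bound the intersection count by $O(1)$. For an adjacent layer, noting that $\eps_{j \pm 1}/\eps_j$ is a constant, the same packing argument applied at the scale $\eps_j$ together with Lemma~\ref{lem:cap-mac} (expanding $C_i^{(j)}$ by a constant to absorb any witness overlap) again caps the count at $O(1)$.

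The main obstacle I anticipate is the cross-layer case of Property~3. The single-layer packing is essentially inherited from Lemma~\ref{lem:ecc}, but to bound the number of finer-scale (or coarser-scale) witnesses met by $C_i^{(j)}$ one must work at the finer scale using the disjointness of the finer Macbeath regions, Lemma~\ref{lem:cap-mac}, and a volume comparison through Lemma~\ref{lem:width-vol}, all while respecting the geometric separation of the depth shells. Calibrating the shell constants so that simultaneously (a)~case~(ii) of Property~2 at the finest scale is preserved, (b)~the depth bands at distant layers do not overlap, and (c)~the packing at each scale yields only $O(1)$ neighbouring witnesses, is the delicate constant-calibration at the heart of the construction, and it is what produces the single polylogarithmic factor $\log(1/\eps)$ over the optimal bound.
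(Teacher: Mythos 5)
Your construction is genuinely different from the paper's, and unfortunately it stumbles on exactly the obstruction the paper flags as the central difficulty. The paper applies Lemma~\ref{lem:ecc} \emph{once} at a single fine scale $\alpha = \Theta(\widehat{\eps})$, so every $R'_i$ and cap $C'_i$ has width $\Theta(\alpha)$; it then groups the bodies by \emph{volume} (not depth), and uses the homothety $T_j$ (a uniform shrink toward the center $O$) to transport the group-$j$ bodies into the annular layer $L_j = K_j \setminus K_{j+1}$. The collector for $R_i$ is a union of bounded expansions of caps of $K_r$ for $r \le j$, truncated layer by layer. Your proposal instead applies Lemma~\ref{lem:ecc} at $\Theta(\log(1/\eps))$ distinct scales $\eps_j$ and groups bodies by the \emph{depth} of their Macbeath centers, truncating collectors to dyadic depth bands. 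These are not equivalent: within a single scale of Lemma~\ref{lem:ecc}, the $(1/5)$-Macbeath regions all have width $\Theta(\eps_j)$ and their centers all sit at depth $\Theta(\eps_j)$, yet their volumes can vary over the full range $[\Omega(\eps_j^d), O(\eps_j)]$. Your depth filter therefore does essentially nothing at a fixed scale, whereas the paper's volume grouping is doing the real work.

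This is where the gap lies, in Property~3. You write that ``within the same layer, the containment $C_i^{(j)} \subseteq A_i^{(j)}$ lets a direct application of the packing argument behind Lemma~\ref{lem:bound-mac} bound the intersection count by $O(1)$,'' but Lemma~\ref{lem:bound-mac} bounds the \emph{total} number of disjoint Macbeath regions, not the number meeting a single cap; no $O(1)$-per-cap bound follows from it. The cylinder counterexample the paper uses to motivate the construction applies verbatim to your collectors: take $K$ a cylinder in $\RE^3$, $A_i^{(j)}$ a width-$\Theta(\eps_j)$ cap parallel to the flat face (so $\vol(A_i^{(j)}) = \Theta(\eps_j)$ and its witness $R_i^{(j)}$ also has volume $\Theta(\eps_j)$), and observe that $\Omega(1/\sqrt{\eps_j})$ of the disjoint scale-$\eps_j$ Macbeath regions lie near the circular rim, all with centers at depth $\Theta(\eps_j)$ from $\partial K$ and hence all surviving your filter and all lying inside the depth band $[\eps_{j-1},\eps_{j+1}]$. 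They all meet $A_i^{(j)}$, and they have tiny volume, so the packing heuristic gives $\omega(1)$. The depth-band truncation simply does not separate the flat-face body from the rim bodies because they are all at the same depth; it is their volumes that differ, which is precisely why the paper sorts by volume and pushes the small-volume bodies into deeper annuli $L_j$ where the truncated collector $C_{i,r}^{\sigma} \cap L_r$ cannot reach them. A secondary issue is that your filter (keeping only centers at depth $[\eps_{j-1},\eps_j]$) and your disjointness claim across scales are asserted but not justified: Lemma~\ref{lem:ecc}'s guarantee that every width-$\eps_j$ cap contains some $R_i^{(j)}$ can be destroyed by the filter unless the implicit constants are shown to line up, and Macbeath regions at adjacent scales need not be disjoint even when their centers lie in disjoint depth shells, since a Macbeath region can extend a constant fraction of its center's depth toward and away from $\partial K$ and, in tangential directions, far farther. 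The paper sidesteps both issues structurally: a single family $\RR'$ of disjoint bodies, one homothety per group, and containment $R_i \subseteq L_j$ proved in Lemma~\ref{lem:prop0}.
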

%-----------------------------------------------------------------------

%-----------------------------------------------------------------------
As mentioned earlier, our proof of this lemma is based on a stratified placement of the convex bodies from Lemma~\ref{lem:ecc}, which are distributed among $O(\log(1/\eps))$ layers that lie close to the boundary of $K$. Let $\alpha = c_1 \, \eps / \log(1/\eps)$, where $c_1$ is a suitable constant to be specified later. We begin by applying Lemma~\ref{lem:ecc} to $K$ using $\eps = \alpha$. This yields a collection $\RR'$ of $k = O(1/\alpha^{(d-1)/2})$ disjoint centrally symmetric convex bodies $\{R'_1, \ldots, R'_k\}$ and associated caps $\CC' = \{C'_1, \ldots, C'_k\}$. Our definition of the convex bodies $R_i$ and regions $C_i$ required in Lemma~\ref{lem:layers} will be based on $R'_i$ and $C'_i$, respectively. In particular, the convex body $R_i$ will be obtained by translating a scaled copy of $R'_i$ into an appropriate layer, based on the volume of $R'_i$. 

Before describing the construction of the layers, it will be convenient to group the bodies in $\RR'$ based on their volumes. We claim that the volume of any convex body $R'_i$ lies between $c_2 \alpha^d$ and $c_3 \alpha$ for suitable constants $c_2$ and $c_3$. By Property~1 of Lemma~\ref{lem:ecc}, $R'_i \subseteq C'_i \subseteq (R'_i)^{\lambda}$ and $C'_i$ has width $\beta\alpha$, for constants $\beta$ and $\lambda$ depending only on $d$.  By Lemma~\ref{lem:width-vol}, the volume of $C'_i$ is $O(\alpha)$ and $\Omega(\alpha^d)$. Since $\vol(R'_i) = \Theta(\vol(C'_i))$, the desired claim follows.

We partition the set $\RR'$ of convex bodies into $t$ groups, where each group contains bodies whose volumes differ by a factor of at most 2. More precisely, for $0 \le j \le t-1$, group $j$ consists of bodies in $\RR'$ whose volume lies between $c_3 \alpha / 2^j$ and $c_3 \alpha / 2^{j+1}$.  The lower and upper bound on the volume of bodies in $\RR'$ implies that the number of groups $t$ can be expressed as $\floor{c_4 \log(1/\alpha)}$ for a suitable constant $c_4$ (depending on $c_2$ and $c_3$). 

Next we describe how the layers are constructed. We will construct $t$ layers corresponding to the $t$ groups of $\RR'$.  Let $\gamma = 1 - 4d \beta \alpha$. For $0 \le j \le t$, let $T_j$ denote the linear transformation that represents a uniform scaling by a factor of $\gamma^j$ about the origin, and let $K_j = T_j(K)$ (see Figure~\ref{f:layers}(a)). Note that $K_0 = T_0(K) = K$. For $0 \le j \le t-1$, define layer $j$, denoted $L_j$, to be the difference $K_{j} \setminus K_{j+1}$. Whenever we refer parallel supporting hyperplanes for two bodies $K_i$ and $K_j$, we assume that both hyperplanes lie on the same side of the origin.

%-----------------------------------------------------------------------
\begin{figure}[tbp]
  \centerline{\includegraphics[scale=.75]{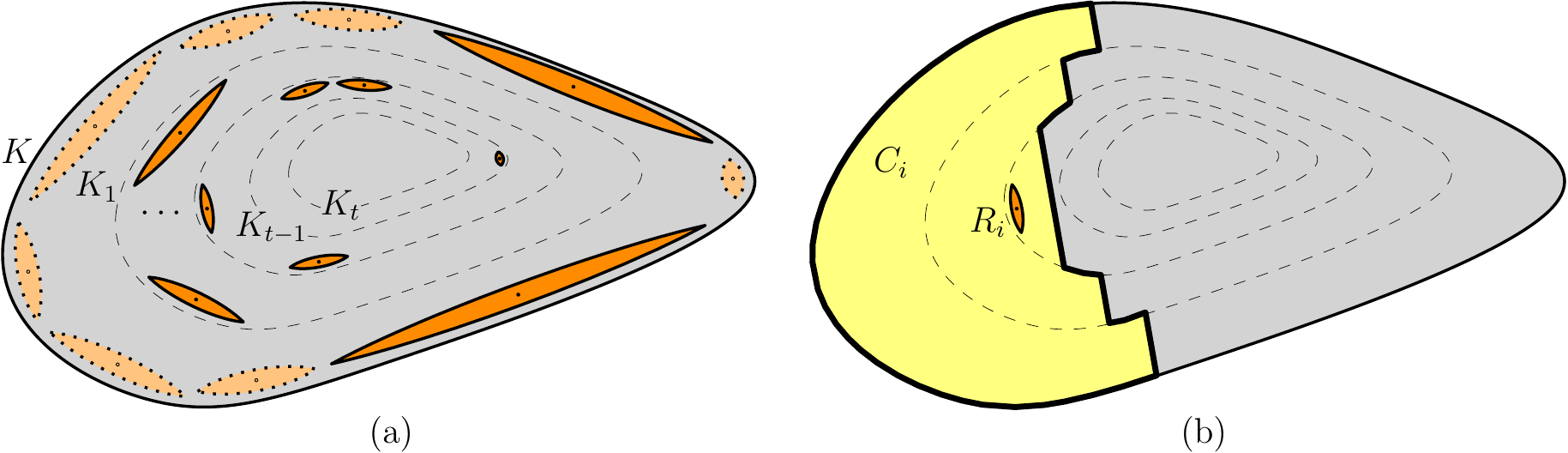}}
  \caption{\label{f:layers}(a) Stratified placement of the bodies $R_i$ and (b) the region $C_i$ corresponding to a body $R_i$. (Figure not to scale.)}
\end{figure}
%-----------------------------------------------------------------------

The following lemma describes some straightforward properties of these layers and the scaling transformations. In particular, the lemma shows that the $t$ layers lie close to the boundary of $K$ (within distance $\eps$) and each layer has a ``thickness'' of $\Theta(\alpha)$.

\begin{lemma} \label{lem:scale}
Let $\eps > 0$ be a sufficiently small parameter. For sufficiently small constant $c_1$ in the definition of $\alpha$ (depending on $c_4$, $\beta$, and $d$), the layered decomposition and the scaling transformations described above satisfy the following properties:
 \begin{enumerate}[label=(\alph*)]
 \item\label{a} For $0 \le j \le t-1$, the distance between parallel supporting hyperplanes of $K_j$ and $K_{j+1}$ is at most $2d \beta \alpha$.
 
 \item\label{b}  For $0 \le j \le t-1$, the distance between parallel supporting hyperplanes of $K_j$ and $K_{j+1}$ is at least $\beta \alpha$.

 \item\label{c} The distance between parallel supporting hyperplanes of $K$ and $K_t$ is at most $\eps$.
  
 \item\label{d} For $0 \le j \le t$, the scaling factor for $T_j$ is at least 1/2 and at most 1.
  
 \item\label{e} For $0 \le j \le t$, $T_j$ preserves volumes up to a constant factor.
 
 \item\label{f} For $0 \le j \le t$, and any point $p \in K$, the distance between $p$ and $T_j(p)$ is at most $2 j d \beta \alpha$.
 \end{enumerate}

\end{lemma}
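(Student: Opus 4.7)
The plan is to reduce all six properties to elementary computations involving the scaling factor $\gamma^j$ and its deficit $1 - \gamma^j$, leveraging the canonical form of $K$. Specifically, since $B^{1/2d}(O) \subseteq K \subseteq B^{1/2}(O)$, for any outward-normal direction $u$ the distance $D_u$ from $O$ to the supporting hyperplane of $K$ in direction $u$ lies in $[1/(2d),\, 1/2]$. Because $K_j = T_j(K) = \gamma^j K$ is a uniform scaling about $O$, the analogous distance for $K_j$ equals $\gamma^j D_u$, and the separation between two parallel supporting hyperplanes of $K_j$ and $K_{j+1}$ equals $(1-\gamma)\gamma^j D_u = 4d\beta\alpha \cdot \gamma^j D_u$. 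Essentially every clause of the lemma will follow from this single identity combined with bounds on $\gamma^j$.

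I will handle the clauses in the order (d), (e), (a), (b), (c), (f), because (a)--(c) all invoke the scale bound from (d). For (d), I will expand $\gamma^t = (1 - 4d\beta\alpha)^t$ and use $\ln(1-x) \ge -2x$ for small $x$, combined with the key estimate $\alpha \log(1/\alpha) \le 2 c_1 \eps$, which holds for sufficiently small $\eps$ once $\alpha = c_1 \eps / \log(1/\eps)$ is substituted; choosing $c_1$ small enough relative to $c_4$, $\beta$, and $d$ then yields $\gamma^t \ge 1/2$, while $\gamma^j \le 1$ is immediate from $\gamma < 1$. Property (e) is automatic, since $T_j$ rescales volumes by $\gamma^{jd} \in [2^{-d},\, 1]$. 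For (a) and (b), the identity above combined with $D_u \in [1/(2d),\, 1/2]$ and $\gamma^j \in [1/2,\, 1]$ gives the claimed bounds $2d\beta\alpha$ and $\beta\alpha$, respectively. For (c), telescoping gives the total separation between parallel supporting hyperplanes of $K$ and $K_t$ in direction $u$ as $(1-\gamma^t) D_u \le (1-\gamma^t)/2$; bounding $1 - \gamma^t \le t(1-\gamma) \le 4 d \beta c_4 \cdot \alpha \log(1/\alpha) \le 8 d \beta c_4 c_1 \eps$ and shrinking $c_1$ further forces this total below $\eps$. Finally, (f) follows from $\|p - T_j(p)\| = (1-\gamma^j)\|p\|$ together with $\|p\| \le 1/2$ and the Bernoulli-type bound $1 - \gamma^j \le j(1-\gamma) = 4 j d \beta \alpha$.

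The sole obstacle is entirely quantitative: calibrating $c_1$ so that all of these constraints hold simultaneously. Each reduces to an inequality of the shape $c_1 \le 1/(\text{const depending on } d, \beta, c_4)$, so it suffices to take $c_1$ smaller than the minimum of the resulting thresholds. The one analytic point that deserves explicit verification is the estimate $\alpha \log(1/\alpha) = \Theta(\eps)$ for $\alpha = c_1 \eps / \log(1/\eps)$, which follows from $\log(1/\alpha) = \log(1/\eps) + \log\log(1/\eps) - \log c_1 = (1 + o(1)) \log(1/\eps)$; this is precisely what absorbs the extra $\log(1/\eps)$ factor introduced by the $t = \Theta(\log(1/\eps))$ layers and makes (c) uniform in $\eps$.
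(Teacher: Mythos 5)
Your proposal is correct and follows essentially the same approach as the paper: both arguments reduce everything to the scaling-by-$\gamma^j$-about-$O$ identity together with the canonical-form bounds $D_u \in [1/(2d), 1/2]$, and both calibrate $c_1$ using $\alpha\log(1/\alpha) = \Theta(\eps)$. The only difference is a minor reordering of dependencies: the paper proves (c) first and derives the lower bound of (d) from it via the outer-hyperplane displacement, whereas you establish $\gamma^t \ge 1/2$ directly from the exponential estimate and then use it uniformly for (a)--(c); this is cosmetic.
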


\begin{proof}
To prove \ref{a}, let $h_1,h_2$ denote parallel supporting hyperplanes of $K_j,K_{j+1}$, respectively. Since $K$ is in canonical form, and the scaling factor of the transformation $T_j$ is at most 1, it follows that $h_1$ is at distance at most $1/2$ from the origin.  Since $h_2$ is the hyperplane obtained by scaling $h_1$ by a factor of $1-4d\beta\alpha$ about the origin, it follows that the distance between $h_1$ and $h_2$ is at most $2 d \beta \alpha$.

To prove \ref{c}, let $h_1,h_2$ denote parallel supporting hyperplanes of $K, K_t$, respectively. The upper bound of \ref{a} implies that the distance between $h_1$ and $h_2$ is at most $2 t d  \beta \alpha$. Recall that $t \le c_4 \log(1/\alpha)$ and $\alpha = c_1 \, \eps / \log(1/\eps)$. By choosing a sufficiently small constant $c_1$ in the definition of $\alpha$ (depending on $d, c_4$ and $\beta$), we can ensure that the distance between $h_1$ and $h_2$ is at most $2 t d \beta \alpha \le \eps$. 

In the rest of this proof, we will assume that $c_1$ in the definition of $\alpha$ is sufficiently small, so \ref{c} holds. To prove \ref{d}, note that we only need to show the lower bound on the scaling factor of $T_j$, since the upper bound is obvious. Again, let $h_1,h_2$ denote parallel supporting hyperplanes of $K, K_t$, respectively. Since $K$ is in canonical position, $h_1$ is at distance at least $1/(2d)$ from the origin. Recall that $T_t$ maps $h_1$ to $h_2$ and, as shown above, the distance between $h_1$ and $h_2$ is at most $\eps$. It follows that the scaling factor of $T_t$ is at least $1 - \eps / (1/2d) = 1 - 2d\eps$. By choosing $\eps$ sufficiently small, we can ensure that the scaling factor of $T_t$ is at least $1/2$. Clearly, this lower bound on the scaling factor also applies to any transformation $T_j$, $0 \le j \le t$. This proves \ref{d}. Note that \ref{e} is an immediate consequence.

To prove \ref{b}, let $h_1,h_2$ denote parallel supporting hyperplanes of $K_j,K_{j+1}$, respectively. Let $h'_1, h'_2$, denote the corresponding supporting hyperplanes of $K,K_1$, respectively. That is, $h_1 = T_j(h'_1)$ and $h_2 = T_j(h'_2)$. Since $K$ is in canonical form, $h'_1$ is at distance at least $1/(2d)$ from the origin. As $h'_2$ is obtained by scaling $h'_1$ by a factor of $1-4d\beta\alpha$ about the origin, it follows that the distance between $h'_1$ and $h'_2$ is at least $2 \beta \alpha$. Since $h_1 = T_j(h'_1)$ and $h_2 = T_j(h'_2)$ and, by \ref{d}, the scaling factor of $T_j$ is at least $1/2$, \ref{b} follows.

Finally, to prove \ref{f}, note that the distance of $p$ from the origin is at most $1/2$. It follows that applying $T_1$ to $p$ moves it closer to the origin by a distance of at most $2 d\beta \alpha$. Since $T_j = (T_1)^j$, \ref{f} follows.
\end{proof}

We are now ready to define the regions $R_i$ and $C_i$ required in Lemma~\ref{lem:layers}. Suppose that $R'_i$ is in group $j$ and let $C'_i = K \cap H'_i$, where $H'_i$ is a halfspace. We define $R_i = T_j(R'_i)$. In order to define $C_i$, we first define caps $C_{i,r}$ of $K_r$ as $C_{i,r} = K_r \cap T_j(H'_i)$ for $0 \le r \le j$. We then define 
\[
  C_i ~ = ~ \bigcup_{r = 0}^{j} C_{i,r}^{\sigma} \cap L_r,
\]
where $\sigma = 4 d \beta^2$. (See Figure~\ref{f:layers}(b).) 

In Lemma~\ref{lem:prop0}, we show that the regions $R_i$ are contained in layer $j$ if $R'_i$ is in group $j$. In Lemma~\ref{lem:prop12}, we establish Properties 1 and 2 of Lemma~\ref{lem:layers}. Finally, in Lemma~\ref{lem:prop3}, we establish Property~3 of Lemma~\ref{lem:layers}.

\begin{lemma} \label{lem:prop0}
Let $R_i \in \RR$. If $R'_i$ is in group $j$, then $C_{i,j} = T_j(C'_i)$ and $R_i \subseteq C_{i,j} \subseteq L_j$.
\end{lemma}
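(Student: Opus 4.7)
The plan is to address the three assertions in order, since they build naturally on one another.

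First, the identity $C_{i,j} = T_j(C'_i)$ should be immediate from definitions: because $T_j$ is a uniform scaling about the origin and therefore a linear map, it commutes with intersection, so
\[
  T_j(C'_i) \;=\; T_j(K \cap H'_i) \;=\; T_j(K) \cap T_j(H'_i) \;=\; K_j \cap T_j(H'_i) \;=\; C_{i,j}.
\]
No substantive work is required here.

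Second, for $R_i \subseteq C_{i,j}$, I would invoke Property~1 of Lemma~\ref{lem:ecc} (applied with parameter $\alpha$) to obtain $R'_i \subseteq C'_i$, and then apply $T_j$ to both sides and use the identity just established: $R_i = T_j(R'_i) \subseteq T_j(C'_i) = C_{i,j}$.

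The main obstacle is the third inclusion $C_{i,j} \subseteq L_j = K_j \setminus K_{j+1}$, which requires a width comparison. The containment $C_{i,j} \subseteq K_j$ is trivial since $C_{i,j}$ is by construction a cap of $K_j$; the real work is to show that $C_{i,j}$ does not meet the interior of $K_{j+1}$. For this I would first compute the width of $C_{i,j}$: the cap $C'_i$ has width $\beta\alpha$ by Property~1 of Lemma~\ref{lem:ecc}, and $T_j$ is a uniform scaling by a factor of $\gamma^j \le 1$ (Lemma~\ref{lem:scale}\ref{d}), so $C_{i,j}$ has width at most $\beta\alpha$. On the other hand, by Lemma~\ref{lem:scale}\ref{b}, any pair of parallel supporting hyperplanes of $K_j$ and $K_{j+1}$ (on the same side of the origin as the cap) are separated by at least $\beta\alpha$. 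Consequently, the base hyperplane of $C_{i,j}$, which is at distance at most $\beta\alpha$ from the supporting hyperplane of $K_j$ at the apex of the cap, lies on the outer side of the corresponding supporting hyperplane of $K_{j+1}$. Since $K_{j+1}$ lies entirely on the inner side of this latter hyperplane, $C_{i,j}$ cannot meet the interior of $K_{j+1}$, yielding the desired inclusion. The only mildly delicate point is the boundary case $j=0$, where the cap width and the layer thickness may coincide; but the argument is phrased via closed halfspaces, so at worst $C_{i,j}$ touches $\partial K_{j+1}$, which still lies in $L_j$.
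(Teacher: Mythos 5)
Your proof is correct and takes essentially the same approach as the paper. The only (immaterial) difference is that the paper establishes $R'_i \subseteq C'_i \subseteq L_0$ first and then applies $T_j$ to all three sets at once, whereas you apply $T_j$ first and then verify $C_{i,j} \subseteq L_j$ directly in layer $j$, which requires the (correct) extra observation that $T_j$ does not increase widths.
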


\begin{proof}
Let $H'_i$ denote the halfspace as defined above, that is, $C'_i = K \cap H'_i$. By definition, $C_{i,j} = K_j \cap T_j(H'_i) = T_j(K \cap H'_i) = T_j(C'_i)$. By Property~1 of Lemma~\ref{lem:ecc}, $R'_i \subseteq C'_i$ and $C'_i$ is a cap of $K$ of width $\beta \alpha$. By Lemma~\ref{lem:scale}\ref{b}, the distance between any parallel supporting hyperplanes of $K$ and $K_1$, respectively, is at least $\beta \alpha$. It follows that $R'_i \subseteq C'_i \subseteq L_0 = K \setminus K_1$. Applying the transformation $T_j$ to all these sets yields $R_i \subseteq C_{i,j} \subseteq L_j = K_j \setminus K_{j+1}$. This completes the proof.
\end{proof}

%-----------------------------------------------------------------------
\begin{figure}[tbp]
  \centerline{\includegraphics[scale=.75]{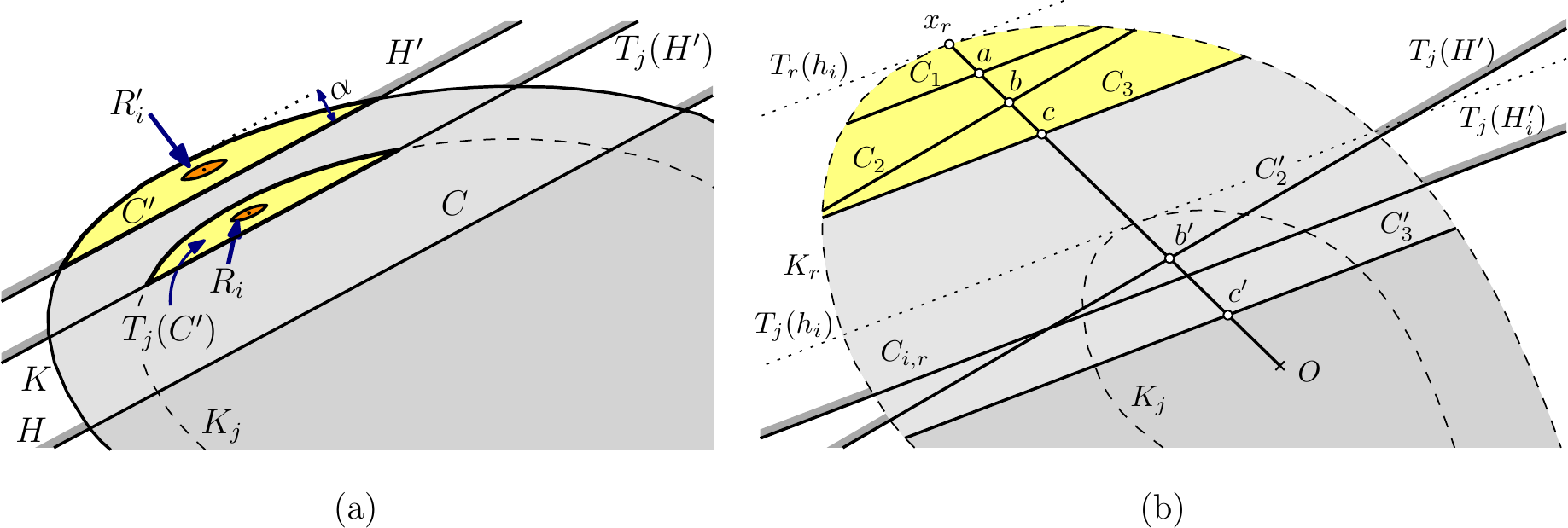}}
  \caption{\label{f:prop12}Proof of Lemma~\ref{lem:prop12} (a) Case~1 and (b) Case~2. (Figure not to scale.)}
\end{figure}
%-----------------------------------------------------------------------

%-----------------------------------------------------------------------
\begin{lemma} \label{lem:prop12}
 Let $C$ be any cap of $K$. Then there is an $i$ such that either (i) $R_i \subseteq C$ or (ii) $C \subseteq C_i$. Furthermore, if the width of $C$ is $\eps$, then (i) holds.
\end{lemma}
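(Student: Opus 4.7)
My plan is to split into two regimes based on how $w := \mathrm{width}(C)$ compares with $\eps$, producing either a witness $R_i \subseteq C$ (alternative~(i)) or a collector $C \subseteq C_i$ (alternative~(ii)). The ``Furthermore'' clause will follow from the $w \ge \eps$ regime.

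When $w \ge \eps$, I plan to carve out the sub-cap $C^* \subseteq C$ obtained by shifting $C$'s base hyperplane by $\eps/2$ toward the apex, so that $C^*$ has width at least $\eps/2$, which exceeds $\alpha$ for small enough $\eps$. Applying Lemma~\ref{lem:ecc} Property~2 to a width-$\alpha$ sub-cap of $C^*$ yields some $R'_i \subseteq C^*$, and I write $R_i = T_j(R'_i)$ with $j$ the group of $R'_i$. To show $R_i \subseteq C$, I track signed distances from $C$'s base along its outward normal $n$: each $p \in R'_i$ satisfies $\langle p,n\rangle \ge b+\eps/2$, where $b$ is the offset of $C$'s base, and the contraction $T_j$ shrinks this to $\gamma^j \langle p,n\rangle$, a decrease of $(1-\gamma^j)\langle p,n\rangle \le (1-\gamma^t)/2$. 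By tightening the choice of $c_1$ in $\alpha = c_1\eps/\log(1/\eps)$ (a mild refinement of the proof of Lemma~\ref{lem:scale}\ref{c}--\ref{d}) so that $1-\gamma^t \le \eps$, this decrease is at most $\eps/2$, keeping $T_j(p)$ on the apex side of $C$'s base, so $R_i \subseteq C$.

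When $w < \eps$, I plan to choose a ``right depth'' based on how far $C$ penetrates the stratified decomposition. Let $H_C$ denote the halfspace defining $C$ and set $\tilde C_j := K \cap T_j^{-1}(H_C)$; as $j$ grows, $T_j^{-1}$ shifts the bounding hyperplane away from the origin, so $\mathrm{width}(\tilde C_j)$ is nonincreasing in $j$. Let $j^*$ be the largest $j \in \{0,\ldots,t\}$ with $\mathrm{width}(\tilde C_j) \ge \alpha$, using the nominal value $j^*=0$ in the degenerate subcase $w<\alpha$ (where I replace $\tilde C_{j^*}$ by a width-$\alpha$ super-cap $\hat C \supseteq C$ obtained by shifting $H_C$ toward the origin, and in which $C \subseteq L_0$ because $L_0$ has thickness at least $\beta\alpha > w$). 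Applying Lemma~\ref{lem:ecc} Property~2 to a width-$\alpha$ sub-cap $D$ of $\tilde C_{j^*}$ (or $\hat C$) supplies an $R'_i$ with the sandwich $(C'_i)^{1/\beta^2} \subseteq D \subseteq C'_i$; a volume comparison via Lemma~\ref{lem:width-vol} together with the group construction shows that the group $j$ of $R'_i$ satisfies $j \le j^*$. If $R_i \subseteq C$ then (i) holds; otherwise I establish (ii) by arguing that for each layer $r \in \{0,\ldots,j\}$, $C \cap L_r \subseteq C_{i,r}^\sigma \cap L_r$, while $C$ does not intrude into deeper layers than $L_j$. The per-layer inclusion uses Lemma~\ref{lem:sandwich} to transport the sandwich $\tilde C_{j^*} \subseteq C'_i$ from $K$ to $K_r$ by scaling about a common point, and the expansion factor $\sigma = 4d\beta^2$ is chosen precisely to absorb both the $\beta^2$-looseness of the sandwich and the $O(t\alpha) = O(\eps)$ cumulative shift between $H_C$ and $T_j(H'_i)$ accumulated across the $t$ layers.

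The main obstacle I foresee lies in the per-layer inclusion in the $w<\eps$ regime: reconciling the group assignment $j \le j^*$ with the geometric requirement that $C_{i,r}^\sigma$ in each layer envelops the corresponding slice of $C$. This hinges on carefully tracking how the halfspaces $T_j^{-1}(H_C)$, $T_j(H'_i)$, and the supporting hyperplanes of the $K_r$ interact, and verifying that the chosen $\sigma = 4d\beta^2$ is large enough to cover the worst case across all $r \le j$ simultaneously.
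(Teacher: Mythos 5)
Your $w \ge \eps$ regime is fine: choosing $R'_i \subseteq C^*$ and bounding the radial shrinkage by $(1-\gamma^t)/2 \le \eps/2$ does give $R_i \subseteq C$, and a modest tightening of $c_1$ is all that is needed. It is the $w < \eps$ regime where you have a genuine gap, and you partially acknowledge it yourself. Three specific problems:

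First, your choice of $R'_i$ (coming from a width-$\alpha$ sub-cap $D$ of $\tilde C_{j^*}$) is different from what is needed to match the collector $C_i$, whose construction is tied to $H'_i$ and the group $j$ of $R'_i$. You assert ``a volume comparison via Lemma~\ref{lem:width-vol} together with the group construction shows $j \le j^*$,'' but there is no such implication: the group index $j$ is driven by the \emph{volume} of $R'_i$ (hence by how pointed $\partial K$ is near the base centroid), whereas $j^*$ is a purely width-based depth parameter for the direction of $H_C$. These two indices are not comparable in general, and I do not see how Lemma~\ref{lem:width-vol} would bridge them.

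Second and more fundamentally, your explanation of why $\sigma = 4d\beta^2$ suffices --- that it must ``absorb \dots the $O(t\alpha) = O(\eps)$ cumulative shift \dots across the $t$ layers'' --- cannot be right as stated. The cumulative shift is $\Theta(\eps)$ while the basic cap width is $\Theta(\alpha)$; their ratio is $\Theta(\log(1/\eps))$, which a constant $\sigma$ cannot absorb. The actual mechanism is that in layer $r$ the collector slice $C_{i,r}$ itself has width that grows linearly, on the order of $(j-r+1)\beta\alpha$, which exactly matches the linear-in-$(j-r)$ growth of the required scaling factor $\rho$ in Lemma~\ref{lem:sandwich}; the constant $\sigma$ only needs to cover the ratio $\rho/(j-r+1/2)$, not the total shift. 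Without this observation, the per-layer inclusion --- which you flag as your ``main obstacle'' --- does not close.

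Third, the claim that ``$C$ does not intrude into deeper layers than $L_j$'' is immediate in the paper's proof because there the case distinction is $T_j(H') \subseteq H$ versus $H \subset T_j(H')$, and in the second case $C \subseteq K \cap T_j(H')$ whose bounding hyperplane sits inside $L_j$. With your choice of $i$ and $j$ (decoupled from $H_C$ via $j^*$), this containment is not automatic and would need its own argument.

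In short, the published proof avoids all of these issues by always taking $C'$ to be the width-$\alpha$ sub-cap of $C$ with parallel base, letting $j$ be the group of the resulting $R'_i$, and splitting cases on the single condition $T_j(H') \subseteq H$. That condition simultaneously decides which alternative of the lemma holds, guarantees the depth bound, and sets up the Lemma~\ref{lem:sandwich} scaling so that the crucial ratio $\width(C'_3)/\width(C_{i,r})$ is a constant. Your proposal needs both the correct mechanism for why $\sigma$ is constant and a justification (or replacement) of the $j \le j^*$ step before the $w<\eps$ regime is sound.
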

%-----------------------------------------------------------------------

\begin{proof}
Let $C' \subseteq C$ be the cap of width $\alpha$, whose base is parallel to the base of $C$. Let $H$ and $H'$ denote the defining halfspaces of $C$ and $C'$, respectively. By Property~2 of Lemma~\ref{lem:ecc}, there is an $i$ such that $R'_i \subseteq C'$. Suppose that $R'_i$ is in group $j$.  We consider two cases, depending on whether $T_j(H') \subseteq H$ or $H \subset T_j(H')$. To complete the proof of the lemma, we will show that in the former case, $R_i \subseteq C$ and, in the latter case, $C \subseteq C_i$. Additionally, we will show that if $C$ has width $\eps$, then the former case holds (implying that $R_i \subseteq C$).

\medskip\noindent\textbf{Case~1: $T_j(H') \subseteq H$.}
Arguing as in the proof of Lemma~\ref{lem:prop0} (but with $C'$ in place of $C'_i$), we have $R_i \subseteq T_j(C') = K_j \cap T_j(H') \subseteq L_j$ (see Figure~\ref{f:prop12}(a)). Observe that $K_j \cap T_j(H')  \subseteq K \cap H = C$. Therefore $R_i \subseteq C$. 

Also, by Lemma~\ref{lem:scale}\ref{c}, the distance between any parallel supporting hyperplanes of $K$ and $K_t$ is at most $\eps$. Since $K_j \cap T_j(H') \subseteq L_j$, it follows that the width of cap $K \cap T_j(H')$ is at most $\eps$. Therefore, if $C$ has width $\eps$, then $T_j(H') \subseteq H$ and Case~1 holds.

\medskip\noindent\textbf{Case~2: $H \subset T_j(H')$.} Recall that we need to show that $C \subseteq C_i$. Clearly, it suffices to show that $K \cap T_j(H') \subseteq C_i$ since $C = K \cap H \subset K \cap T_j(H')$. In turn, the definition of $C_i$ implies that it suffices to show that for $0 \le r \le j$, $T_j(H') \cap K_r ~\subseteq~ C_{i,r}^{\sigma}$. 

By Property~2 of Lemma~\ref{lem:ecc}, there is an $i$ such that $(C'_i)^{\phi}  \subseteq C' \subseteq C'_i$, where $\phi = 1/\beta^2$. By Property~1 of Lemma~\ref{lem:ecc}, the widths of the caps $(C'_i)^{\phi} $ and $C'_i$ are $\alpha / \beta$ and $\beta \alpha$, respectively. Recall that $H'_i$ denotes the defining halfspace for the cap $C'_i$.  Also, let $x$ denote the apex of $C'_i$, and let $h_i$ denote the supporting hyperplane to $K$ passing through $x$ and parallel to $C'_i$'s base.

Let $C_1, C_2$, and $C_3$ denote the caps of $K_r$ obtained by applying the transformation $T_r$ to the caps $(C'_i)^{\phi}$, $C'$, and $C'_i$, respectively (see Figure~\ref{f:prop12}(b)). We have $C_1 \subseteq C_2 \subseteq C_3$. Let $a, b$ and $c$ denote the point of intersection of the bases of the caps $C_1, C_2$ and  $C_3$, respectively, with the line segment $Ox$. Let $b'$ denote the point of intersection of the base of the cap $K \cap T_j(H')$ with the segment $Ox$. Let $x_r$ denote the point $T_r(x)$. Consider scaling caps $C_2$ and $C_3$ as described in Lemma~\ref{lem:sandwich}, about the point $x_r$ with scaling factor $\rho = \|b' x_r\| / \|b x_r\|$. Let $C'_2$ and $C'_3$ denote the caps of $K_r$ obtained from $C_2$ and $C_3$, respectively, through this transformation. By Lemma~\ref{lem:sandwich}, $C'_2 \subseteq C'_3$. Our choice of the scaling factor implies that $C'_2$ is the cap $T_j(H') \cap K_r$. We claim that $C'_3 \subseteq C_{i,r}^{\sigma}$. Note that this claim would imply that $T_j(H') \cap K_r \subseteq C_{i,r}^{\sigma}$, and complete the proof.

To prove the above claim, we first show that $\rho = O(j-r+1)$. Observe that $\rho = (\|b' b\| + \|b x_r\|) / \|b x_r\| = \|b' b\| / \|b x_r \| + 1$. We have 
\[
	\|b x_r\| 
		~ \ge ~ \|a x_r\| 
		~ \ge ~ \width(C_1) 
		~ \ge ~ \frac{\width((C'_i)^{\phi})}{2} 
		~ \ge ~ \frac{\alpha}{2 \beta},
\]
where in the third inequality, we have used Lemma~\ref{lem:scale}\ref{d} and the fact that $C_1 = T_r((C'_i)^{\phi})$. Also, since $T_{j-r}(b) = b'$, it follows from Lemma~\ref{lem:scale}\ref{f} that $\|b' b\|$ is at most $2 (j-r) d \beta \alpha$. Substituting the derived bounds on $\|b' b\|$ and $\|b x_r\|$, we obtain $\rho \le 4 d \beta^2 (j-r) + 1$. 

Recall that $C'_3$ and $C_{i,r}$ are caps of $K_r$ defined by parallel halfspaces. To prove that $C'_3 \subseteq C_{i,r}^{\sigma}$, it therefore suffices to show that $\width(C'_3) /  \width(C_{i,r}) \le \sigma$. We have
\[
 	\width(C'_3) 
		~  =  ~ \rho \cdot \width(C_3) 
		~ \le ~ \rho \cdot \width(C'_i)
		~  =  ~ \rho \kern+1pt \beta \kern+1pt \alpha,
\]
where in the second step, we have used Lemma~\ref{lem:scale}\ref{d} and the fact that $C_3 = T_r(C'_i)$. Also, it is easy to see that the width of $C_{i,r}$ is the sum of the width of the cap $T_j(C'_i)$ and the distance between the hyperplanes $T_r(h_i)$ and $T_j(h_i)$. Since $\width(C'_i) = \beta \alpha$, by Lemma~\ref{lem:scale}\ref{d}, the width of the cap $T_j(C'_i)$ is at least $\beta \alpha / 2$. Also, by Lemma~\ref{lem:scale}\ref{b}, the distance between the hyperplanes $T_r(h_i)$ and $T_j(h_i)$ is at least $(j-r) \beta \alpha$. It follows that the width of $C_{i,r}$ is at least $\beta \alpha / 2 + (j-r) \beta \alpha = (j-r+1/2) \beta \alpha$. Thus, 
\[
	\frac{\width(C'_3)}{\width(C_{i,r})} 
		~ \le ~ \frac{\rho \kern+1pt \beta \kern+1pt \alpha}{(j-r+1/2) \beta \kern+1pt \alpha} 
		~  =  ~ \frac{\rho}{j-r+1/2} \le \frac{4 \kern+1pt d \kern+1pt \beta^2 (j-r)+1}{j-r+1/2}
		~ \le ~ 4 \kern+1pt d \kern+1pt \beta^2 
		~  =  ~ \sigma,
\]
as desired.
\end{proof}

\begin{lemma} \label{lem:prop3}
For each $i$, the region $C_i$ intersects $O(1)$ bodies of $\RR$.
\end{lemma}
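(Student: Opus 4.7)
My plan is to classify every body $R_{i'} \in \RR$ that meets $C_i$ by the layer $L_{r'}$ that contains it, and then, for each layer, bound the count by an expression that decays geometrically in $j - r'$ so that summing yields $O(1)$.

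First, to pin down the relevant layers: supposing $R'_i$ lies in group $j$, by Lemma~\ref{lem:prop0} each $R_{i'} \in \RR$ lies entirely in the layer $L_{r'}$ corresponding to the group $r'$ of $R'_{i'}$. Since $C_i \cap L_r$ equals $C_{i,r}^\sigma \cap L_r$ for $r \le j$ and is empty for $r > j$, any body meeting $C_i$ must satisfy $r' \le j$ and $R_{i'} \cap C_{i,r'}^\sigma \ne \emptyset$.

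Next, I would transfer the problem from $K_{r'}$ back to $K$ via $T_{r'}^{-1}$. Unpacking the definitions, $T_{r'}^{-1}(C_{i,r'}) = K \cap T_{j-r'}(H'_i)$, which is a cap of $K$ sharing its apex (and apex-supporting hyperplane) with $C'_i$; a direct width calculation in the spirit of the one in the proof of Lemma~\ref{lem:prop12}, together with Lemma~\ref{lem:scale}, shows its width is $\Theta((j-r'+1)\beta\alpha)$. Since uniform scalings about the origin commute with the cap-expansion operation, this yields $T_{r'}^{-1}(C_{i,r'}^\sigma) = (C'_i)^{\mu}$ for some $\mu = O(j-r'+1)$. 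Applying Lemma~\ref{lem:cap-mac} to $R'_{i'} = T_{r'}^{-1}(R_{i'})$ and the cap $(C'_i)^{\mu}$ then gives $R'_{i'} \subseteq (C'_i)^{2\mu}$.

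Third, I would close the argument with a volume/packing estimate. The inequality $\vol(C^\lambda) \le \lambda^d \vol(C)$, combined with $\vol(C'_i) = \Theta(\alpha/2^j)$ (from Property~1 of Lemma~\ref{lem:ecc} together with the definition of group $j$), yields $\vol((C'_i)^{2\mu}) = O((j-r'+1)^d \cdot \alpha/2^j)$. Since the bodies $R'_{i'}$ in group $r'$ are pairwise disjoint and each has volume $\Theta(\alpha/2^{r'})$, at most $O((j-r'+1)^d \cdot 2^{r'-j})$ of them can fit inside $(C'_i)^{2\mu}$. Summing over $r'$ via the substitution $k = j - r'$ produces the convergent series $\sum_{k \ge 0} (k+1)^d 2^{-k} = O(1)$, which bounds the total count of bodies meeting $C_i$.

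The main obstacle I anticipate is the algebraic step identifying $T_{r'}^{-1}(C_{i,r'}^\sigma)$ with an explicit expansion of $C'_i$: verifying that $T_{r'}^{-1}$ sends the apex of $C_{i,r'}$ to that of $C'_i$ and that the width ratio lies in the $\Theta(j-r'+1)$ regime requires the bounds of Lemma~\ref{lem:scale} on $\gamma^{j-r'}$ and on the location of the apex in canonical form. Once this identity is in place, the containment from Lemma~\ref{lem:cap-mac}, the $\lambda^d$ volume bound, and the exponential-over-polynomial convergence of the final series are routine.
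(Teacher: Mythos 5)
Your proof is correct and conceptually mirrors the paper's: both classify the bodies meeting $C_i$ by layer, use Lemma~\ref{lem:cap-mac} to contain each overlapping body in a fixed expansion of the relevant cap, invoke a volume/packing bound, and close with the convergent series $\sum_{k\ge 0}(k+1)^d 2^{-k}$. The one stylistic difference is that you pull everything back to $K$ via $T_{r'}^{-1}$ and work directly with $R'_{i'}$ and $(C'_i)^{\mu}$, whereas the paper stays inside the scaled bodies $K_r$ and first establishes the $r=j$ count as a base case, bounding the other layers by ratio comparison to it; your direct per-layer computation is arguably a touch cleaner, but the underlying ideas and the logarithmic budget are identical.
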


%-----------------------------------------------------------------------
\begin{figure}[tbp]
  \centerline{\includegraphics[scale=.75]{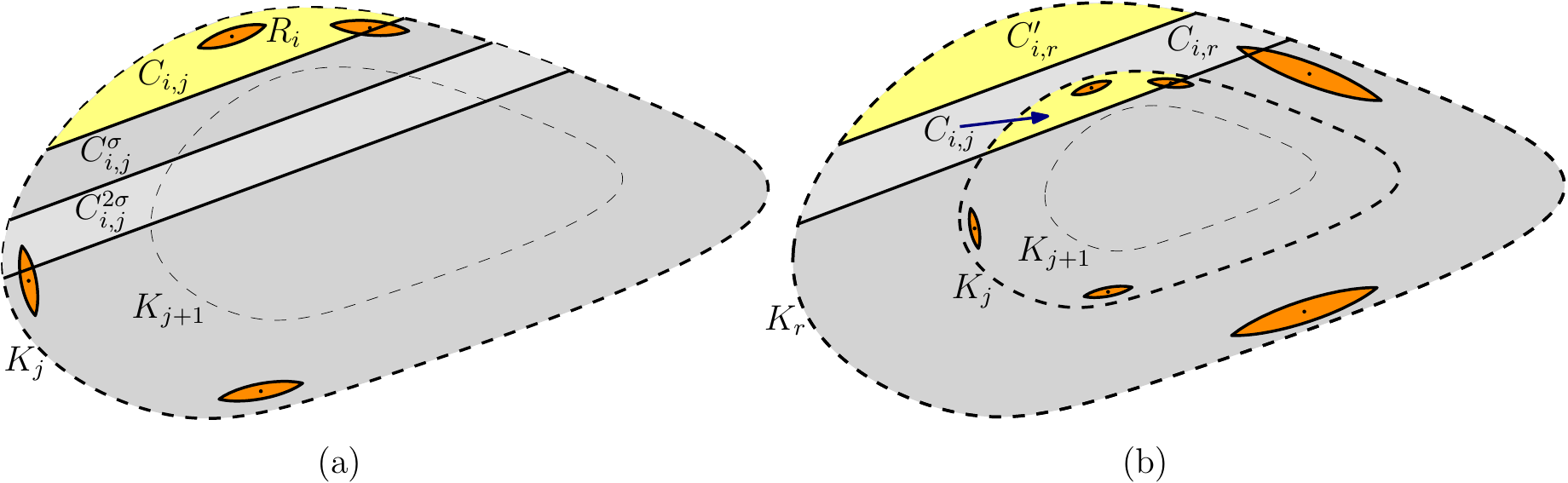}}
  \caption{\label{f:prop3}Proof of Lemma~\ref{lem:prop3}. (Figure not to scale.)}
\end{figure}
%-----------------------------------------------------------------------

\begin{proof}
Suppose that $R'_i$ is in group $j$. Recall that $R_i = T_j(R'_i)$, $C'_i = K \cap H'_i$ and $C_i = \bigcup_{r = 0}^{j} (C_{i,r}^{\sigma} \cap L_r)$. We begin by bounding the number of bodies of $\RR$ that overlap $C_{i,j}^{\sigma} \cap L_j$. (See Figure~\ref{f:prop3}(a).) By Lemma~\ref{lem:prop0}, $C_{i,j} = T_j(C'_i)$ and $R_i \subseteq C_{i,j} \subseteq L_j$.  By Property~1 of Lemma~\ref{lem:ecc}, we have $C'_i \subseteq (R'_i)^{\lambda}$, which implies that $\vol(R'_i) = \Omega(\vol(C'_i))$. Recall that all the bodies of $\RR'$ in group $j$ have the same volumes to within a factor of 2, and so they all have volumes $\Omega(\vol(C'_i))$. By Lemma~\ref{lem:scale}\ref{e}, the scaling transformations used in our construction preserve volumes to within a constant factor. Also, recall that the bodies of $\RR$ in layer $j$ are scaled copies of the bodies of $\RR'$ in group $j$. It follows that the bodies of $\RR$ in layer $j$ all have volumes $\Omega(\vol(C_{i,j}))$.

Next, we assert that any body of $\RR$ that overlaps $C_{i,j}^{\sigma} \cap L_j$ is contained within the cap $C_{i,j}^{2 \sigma}$. To prove this, recall from the proof of Lemma~\ref{lem:ecc} that the bodies of $\RR'$ are  $(1/5)$-scaled disjoint Macbeath regions with respect to $K$. It follows that the bodies of $\RR$ in layer $j$ are $(1/5)$-scaled disjoint Macbeath regions with respect to $K_j$. By Lemma~\ref{lem:cap-mac}, it now follows that any body of $\RR$ that overlaps $C_{i,j}^{\sigma} \cap L_j$ is contained within the cap $C_{i,j}^{2 \sigma}$. Since $\vol(C_{i,j}^{2 \sigma}) = O(\vol(C_{i,j}))$, and all  bodies of $\RR$ in layer $j$ have volumes $\Omega(\vol(C_{i,j}))$, it follows by a simple packing argument that the number of bodies of $\RR$ that overlap $C_{i,j}^{\sigma} \cap L_j$ is $O(1)$.

Next we bound the number of bodies of $\RR$ that overlap $C_{i,r}^{\sigma} \cap L_r$, where $0 \le r < j$. (See Figure~\ref{f:prop3}(b).) Recall that $C_{i,r} = K_r \cap T_j(H'_i)$. Roughly speaking, we will show that the volume of $C_{i,r}$ exceeds the volume of $C_{i,j}$ by a factor that is at most polynomial in $j-r$, while the volume of the bodies in layer $r$ exceeds the volume of the bodies in layer $j$ by a factor that is exponential in $j-r$. This will allow us to show that the number of bodies of $\RR$ that overlap $C_i$ is bounded by a constant. We now present the details.

Define $C'_{i,r} = T_r(C'_i)$. Recall that $C_{i,j} = T_j(C'_i)$. By Lemma~\ref{lem:scale}\ref{e}, $T_j$ and $T_r$ preserve volumes up to constant factors, and so $\vol(C'_{i,r}) = \Theta(\vol(C_{i,j}))$. Since the width of $C'_i$ is $\beta \alpha$, by Lemma~\ref{lem:scale}\ref{d}, it follows that the width of $C'_{i,r}$ is at least $\beta \alpha / 2$. Also, the width of $C_{i,r}$ is upper bounded by the distance between parallel supporting hyperplanes of $K_r$ and $K_{j+1}$ which by Lemma~\ref{lem:scale}\ref{a} is at most $2 d \beta \alpha (j-r+1)$. It follows that the width of $C_{i,r}$ is $O(j-r+1)$ times the width of $C'_{i,r}$. Recalling that, for $\lambda \ge 1$, the volume of a $\lambda$-expansion of a cap is at most $\lambda^d$ times the volume of the cap, it follows that $\vol(C_{i,r}) = O((j-r+1)^d) \cdot \vol(C'_{i,r}) = O((j-r+1)^d) \cdot \vol(C_{i,j})$.

Next, recall that the volume of the bodies of $\RR'$ in group $r$ exceeds the volume of the bodies of $\RR'$ in group $j$ by a factor of $\Omega(2^{j-r+1})$. It follows from Lemma~\ref{lem:scale}\ref{e} and our construction that the volume of the bodies of $\RR$ in layer $r$ exceeds the volume of the bodies of $\RR$ in layer $j$ by a factor of $\Omega(2^{j-r+1})$. For the same reasons as discussed above, any body of $\RR$ that overlaps $C_{i,r}^{\sigma} \cap L_r$ is contained within $C_{i,r}^{2 \sigma}$, and $\vol(C_{i,r}^{2 \sigma}) = O(\vol(C_{i,r}))$. Putting this together with the upper bound on $\vol(C_{i,r})$ shown above, we have $\vol(C_{i,r}^{2 \sigma}) = O((j-r+1)^d) \cdot \vol(C_{i,j})$. By a simple packing argument, it follows that the ratio of the number of bodies of $\RR$ that overlap $C_{i,r}^{\sigma} \cap L_r$ to the number of bodies of $\RR$ that overlap $C_{i,j}^{\sigma} \cap L_j$ is $O((j-r+1)^d / 2^{j-r+1})$. Recall that the number of bodies of $\RR$ that overlap $C_{i,j}^{\sigma} \cap L_j$ is $O(1)$. It follows that the number of bodies of $\RR$ that overlap $C_i =  \bigcup_{r = 0}^{j} (C_{i,r}^{\sigma} \cap L_r)$ is on the order of $\sum_{0 \le r \le j} (j-r+1)^d / 2^{j-r+1} = O(1)$, as desired.
\end{proof}
%-----------------------------------------------------------------------

Let $S$ be a set of points containing one point inside each body of $\RR$ defined in Lemma~\ref{lem:layers} and no other points.

%-----------------------------------------------------------------------
\begin{lemma} \label{lem:apx}
The polytope $P = \conv(S)$ is an $\eps$-approximation of $K$.
\end{lemma}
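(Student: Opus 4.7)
The plan is to verify the two directions of the Hausdorff distance separately. One direction is immediate: since every point of $S$ lies inside $K$ and $K$ is convex, $P = \conv(S) \subseteq K$, so every point of $P$ has distance $0$ to $K$. The substantive direction is to show that every point of $K$ lies within distance $\eps$ of $P$, and this is where I will use Property~1 of Lemma~\ref{lem:layers}.

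I would argue by contradiction. Suppose some point $q \in K$ satisfies $\dist(q, P) > \eps$. Let $p^* \in P$ be the closest point of $P$ to $q$, and set $n = (q - p^*)/\|q - p^*\|$. The hyperplane $h = \{x : n \cdot (x - p^*) = 0\}$ passes through $p^*$, and by the choice of $p^*$ it supports $P$, so $P$ lies in the halfspace $\{x : n \cdot x \le n \cdot p^*\}$. Let $H = \{x : n \cdot x \ge n \cdot p^*\}$ and $C = K \cap H$. Then $C$ is a cap of $K$ containing $q$, and its width satisfies $\width(C) \ge n \cdot (q - p^*) = \|q - p^*\| > \eps$.

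Now I would shrink $C$ to a cap of width exactly $\eps$ by translating the base hyperplane toward the apex. Precisely, let $A$ be an apex of $C$ (so $\width(C) = n \cdot (A - p^*)$), and let $C' = K \cap \{x : n \cdot x \ge n \cdot A - \eps\}$. Then $C'$ is a cap of $K$ of width $\eps$, and because $\width(C) - \eps > 0$ we have $C' \subseteq C$ and moreover $C'$ lies in the open halfspace $\{x : n \cdot x > n \cdot p^*\}$, which is disjoint from $P$. Thus $C' \cap P = \emptyset$.

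To reach the contradiction, apply Property~1 of Lemma~\ref{lem:layers} to the cap $C'$: there exists an index $i$ with $R_i \subseteq C'$. By construction, $S$ contains a point of $R_i$, and this point lies in both $C'$ and $P$, contradicting $C' \cap P = \emptyset$. Hence $\dist(q, P) \le \eps$ for every $q \in K$, completing the proof. There is no real obstacle here; the argument rests entirely on Property~1, and all of the intricate work (stratified placement, bounding widths across layers) has already been absorbed into Lemma~\ref{lem:layers}.
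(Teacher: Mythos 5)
Your proof is correct and takes essentially the same route as the paper: both arguments reduce the claim to the fact that, by Property~1 of Lemma~\ref{lem:layers}, every cap of width $\eps$ contains some $R_i$ and hence a point of $S$. The only difference is that the paper cites the Bronshteyn--Ivanov observation that stabbing all $\eps$-width caps yields an $\eps$-approximation, whereas you unpack that step explicitly via the closest-point/supporting-hyperplane contradiction.
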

%-----------------------------------------------------------------------

%-----------------------------------------------------------------------
\begin{proof}
A set of points $S$ \emph{stabs} every cap of width $\eps$ if every such cap contains at least one point of $S$. It is well known that if a set of points $S \subset K$ stabs all caps of width $\eps$ of $K$, then $\conv(S)$ is an $\eps$-approximation of $K$~\cite{BrI76}. Let $C$ be a cap of width $\eps$. By Lemma~\ref{lem:layers}, Property~1, there is a convex body $R_i \subseteq C$. Since $S$ contains a point that is in $R_i$, we have that the cap $C$ is stabbed.
\end{proof}
%-----------------------------------------------------------------------

To bound the combinatorial complexity of $\conv(S)$, and hence conclude the proof of Theorem~\ref{thm:main}, we use the witness-collector approach~\cite{DGG13}.

%-----------------------------------------------------------------------
\begin{lemma} \label{lem:fewfaces}
The number of faces of $P=\conv(S)$ is $O(1/\widehat{\eps}^{\kern+1pt (d-1)/2})$.
\end{lemma}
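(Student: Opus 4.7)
The plan is to invoke Lemma~\ref{lem:witness-collector} with the collection $\RR = \{R_1, \ldots, R_k\}$ from Lemma~\ref{lem:layers} playing the role of the witness set $\WW$ and the associated regions $\{C_1, \ldots, C_k\}$ playing the role of the collector set $\CC$. Since $|\CC| = k = O(1/\widehat{\eps}^{\kern+1pt(d-1)/2})$, once the three properties required by Lemma~\ref{lem:witness-collector} are verified for this choice, the bound on the combinatorial complexity of $P = \conv(S)$ follows immediately as $O(|\CC|)$.

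The verification proceeds property by property. Property~(1) holds because $S$ was constructed to contain one point inside each $R_i$, and each $R_i$ (being a translated, scaled copy of a $(1/5)$-scaled Macbeath region of $K$ or of some $K_j$) is a full-dimensional centrally symmetric convex body, so its representative point in $S$ can be taken in its interior. For Property~(2), let $H \in \HH$ be any halfspace: if $H \cap K = \emptyset$ then $H \cap S = \emptyset$ lies trivially in any collector; otherwise $C = H \cap K$ is a cap of $K$, and Property~2 of Lemma~\ref{lem:layers} yields an index $i$ such that either $R_i \subseteq C \subseteq H$, in which case $H$ contains the witness $R_i$, or $C \subseteq C_i$, in which case $H \cap S = C \cap S \subseteq C_i$, so $H \cap S$ lies in the collector $C_i$. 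Property~(3) is an immediate restatement of Property~3 of Lemma~\ref{lem:layers}: since each $C_i$ meets only $O(1)$ bodies of $\RR$ and $S$ has exactly one point inside each body, $|C_i \cap S| = O(1)$.

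With all three properties in place, Lemma~\ref{lem:witness-collector} delivers the bound $O(|\CC|) = O(1/\widehat{\eps}^{\kern+1pt(d-1)/2})$, and Theorem~\ref{thm:main} follows by combining with Lemma~\ref{lem:apx}. There is no substantive new obstacle at this stage: the conceptually delicate step --- ensuring that a collector cannot overlap superconstantly many witnesses, which would have failed for the naive choice of caps from Lemma~\ref{lem:ecc} as noted in the cylinder example --- was already absorbed into the stratified construction underlying Lemma~\ref{lem:layers}. The present lemma is therefore a direct harvest of that construction through the deterministic witness-collector framework.
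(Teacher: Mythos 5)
Your proof is correct and follows exactly the route the paper takes: instantiate Lemma~\ref{lem:witness-collector} with $\WW = \{R_1,\ldots,R_k\}$ and $\CC = \{C_1,\ldots,C_k\}$ from Lemma~\ref{lem:layers}, verify the three witness-collector properties by appealing to the three properties of Lemma~\ref{lem:layers} (plus the trivial case $H \cap K = \emptyset$), and read off $O(|\CC|) = O(k)$. The only added touch --- noting that the witness point may be taken in the interior of the full-dimensional body $R_i$ --- is a sensible clarification but does not change the argument.
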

%-----------------------------------------------------------------------

%-----------------------------------------------------------------------
\begin{proof}
Define the witness set $\WW = R_1, \ldots, R_k$ and the collector set $\CC = C_1, \ldots, C_k$, where the $R_i$'s and $C_i$'s are as defined in Lemma~\ref{lem:layers}. As there is a point of $S$ in each body $R_i$, Property~1 of the witness-collector method is satisfied. To prove Property~2, let $H$ be any halfspace. If $H$ does not intersect $K$, then Property~2 of the witness-collector method holds trivially. Otherwise let $C = K \cap H$. By Property~2 of Lemma~\ref{lem:layers}, there is an $i$ such that either $R_i \subseteq C$ or $C \subseteq C_i$. It follows that $H$ contains witness $R_i$ or $H \cap S$ is contained in collector $C_i$. Thus Property~2 of the witness-collector method is satisfied. Finally, Property~3 of Lemma~\ref{lem:layers} implies Property~3 of the witness-collector method. Thus, we can apply Lemma~\ref{lem:witness-collector} to conclude that the number of faces of $P$ is $O(|\CC|) = O(k)$, which proves the lemma.
\end{proof}
%-----------------------------------------------------------------------

%=======================================================================
\section{Conclusions and Open Problems} \label{s:conclusion}
%=======================================================================

We considered the problem of $\eps$-approximating a convex body $K \subset \RE^d$ by a polytope $P$ of small combinatorial complexity. We proved an upper bound of $\tilde{O}(1/\eps^{(d-1)/2})$ to the combinatorial complexity, almost a square-root improvement over the previous bound of $O(1/\eps^{d(d-1)/(d+1)}) \approx O(1/\eps^{d - 2})$. Our bound is optimal up to logarithmic factors. Two natural questions arise. First, can the logarithmic factors be removed or is there a fundamental reason why they appear? Second, our construction is much more complex than the ones of Dudley or Bronshteyn and Ivanov. Can we show that those simpler constructions also attain a low combinatorial complexity or find a counterexample? Furthermore, our bounds are purely existential. While our construction can be turned into an algorithm, there are a number of nontrivial technical issues that would need to be handled in order to obtain an efficient solution.

Our bounds are presented as a function of $\eps$, but a natural question is whether it is possible to obtain bounds that are sensitive to the polytope being approximated. One may consider finding the polytope of minimum combinatorial complexity that approximates a given polytope $K$ as an optimization problem. Approximation algorithms for minimizing the number of vertices of an $\eps$-approximating polytope are well known~\cite{Cla93,MS}, but we know of no similar results for minimizing the combinatorial complexity.

%=======================================================================
\section*{Acknowledgments} 
%=======================================================================

We would like to thank the reviewers (of both the conference and journal versions) for their many valuable suggestions. The work of S.\ Arya was supported by the Research Grants Council of Hong Kong, China under project number 610012. The work of D.\ M.\ Mount was supported by NSF grants CCF-1117259 and CCF-1618866. A preliminary version of this paper appeared in the 32nd International Symposium on Computational Geometry, 2016.

\bibliographystyle{plain}
\bibliography{shortcuts,polytope}

\end{document}